\documentclass[11pt]{article}

\usepackage{amsmath}
\usepackage{amssymb}
\usepackage{amsthm}

\theoremstyle{plain}
\newtheorem{theorem}{Theorem}
\newtheorem{lemma}[theorem]{Lemma}
\newtheorem{proposition}[theorem]{Proposition}
\newtheorem{corollary}[theorem]{Corollary}

\theoremstyle{remark}
\newtheorem{remark}{Remark}

\newcommand{\Cpx}{\mathbb{C}}
\newcommand{\Proj}{\mathbb{P}}
\newcommand{\ip}[2]{\langle #1 , #2 \rangle}
\newcommand{\nm}[1]{\lVert #1 \rVert}
\newcommand{\ab}[1]{\lvert #1 \rvert}
\newcommand{\ta}{\theta_{a}}            % true (actual) parameter
\newcommand{\Ra}{R_{a}}
\newcommand{\ua}{u_{a}}
\newcommand{\lmin}{\lambda_{\min}}
\newcommand{\lmax}{\lambda_{\max}}
\newcommand{\hrm}{\mathsf{H}}           % Hermitian transpose
\newcommand{\trn}{\mathsf{T}}           % transpose
\DeclareMathOperator{\Rea}{Re}
\DeclareMathOperator{\Ima}{Im}
\DeclareMathOperator{\dis}{dist}
\DeclareMathOperator{\diag}{diag}

\allowdisplaybreaks

\begin{document}

\title{\bfseries Unimodality and Radial Monotonicity of the Magnetic
Resonance Fingerprinting $T_1/T_2$ Matching Objective}

\author{Ze Wang\\[2pt]
\normalsize Department of Diagnostic Radiology and Nuclear Medicine,\\
\normalsize University of Maryland School of Medicine,
Baltimore, MD, USA}

\date{}
\maketitle

%---------------------------- ABSTRACT ---------------------------------
\begin{abstract}
\noindent
Magnetic resonance fingerprinting (MRF) estimates tissue parameters by
matching an acquired MR signal time course to entries in a Bloch- or
EPG-simulated dictionary. However, no study has yet proven the
uniqueness of the matching results. In two earlier studies by
exhaustive objective mapping I showed that for two widely used MRF sequences the
normalized-correlation objective exhibits a single dominant peak at the
true $T_1/T_2$ values and decreases smoothly away from that peak. These
empirical properties motivated the fast MRF-ZOOM search algorithm even without using a pre-generated signal dictionary, but
their theoretical basis has remained incomplete. The purpose of this work is to develop a
mathematical framework for the MRF matching objective under
normalized-correlation matching. I first show that, for any fixed
radiofrequency (RF) flip angle (FA) and repetition-time (TR) schedule, each MRF signal sample
can be described as an exponential polynomial in the relaxation rates $R_1 = 1/T_1$ and
$R_2 = 1/T_2$. When the TRs are commensurate, the signal further reduces
to a bivariate polynomial in $(e^{-\Delta R_1}, e^{-\Delta R_2})$. This
structure implies that the squared matching objective is real analytic
and excludes open plateaus. A projective-space formulation then
identifies the objective with the cosine of the Fubini--Study distance
between normalized fingerprints. Its second-order expansion yields a
curvature matrix equal, up to scale, to the Fisher information after
profiling out complex proton density; positive definiteness is therefore
equivalent to local identifiability, a finite local Cram\'{e}r--Rao
bound, and a strict non-degenerate local peak. I then identify an explicit
neighborhood of strict concavity and local radial monotonicity, and for
a specified sequence and target I provide finite Lipschitz certificates
for uniqueness of the global maximizer and, more strongly, for global
radial monotonicity and exclusion of secondary stationary points.
The analysis is
developed here for matching in the full signal space; the corresponding
geometry of subspace-compressed matching is treated in a companion
paper.
\end{abstract}

\noindent\textbf{Keywords:} magnetic resonance fingerprinting;
dictionary searching; objective function; unimodality; real analyticity;
Fisher information; Fubini--Study metric; MRF-ZOOM.

%========================================================================
\section{Introduction}

Magnetic resonance fingerprinting (MRF) \cite{Ma2013,Jiang2015}
estimates multiple tissue parameters from a single time-resolved
acquisition by encoding parameter-dependent signal evolutions through a
prescribed sequence of flip angles (FAs), repetition times (TRs), and,
where applicable, RF phases. The measured time course---the
fingerprint---is compared with simulated dictionary entries, and the
parameters associated with the best match are retained. In the standard
formulation considered here, similarity is measured by the modulus of
the normalized inner product, referred to below as the correlation
coefficient (CC).

The volume of the dictionary and the computational burden of exhaustive dictionary search grows rapidly
with the number of encoded parameters, their ranges and resolutions, and
the fingerprint length. This motivated the MRF-ZOOM family of algorithms
\cite{Wang2018,Wang2020}, which replaces exhaustive comparison with a
coarse-to-fine, parameter-separable search. Its efficiency relies on
empirical regularity of the matching objective. In \cite{Wang2018},
dense objective mapping suggested three properties: (I) off-resonance is
approximately orthogonal to $T_1$ and $T_2$ and the objective is
pseudo-periodic in off-resonance with approximate period $1/\mathrm{mean}(TR)$;
(II) the objective over $(T_1,T_2)$ is smooth and has a single peak at
the true parameters; and (III) $T_1$ and $T_2$ are approximately
orthogonal near that peak. Property (II) was later observed again for
the unbalanced SSFP (FISP) sequence \cite{Wang2020}, including after
dictionary compression.

Those heuristic observations were nevertheless supported primarily by finite
numerical experiments and a local Taylor argument. Dense mapping can
demonstrate the objective shape for a particular sequence and sampled
parameter grid, but it cannot establish a general result or exclude an
unsampled secondary peak. Likewise, the first-order argument in
\cite{Wang2018} explains the local sensitivity of the signal to $T_1$
offsets but does not establish global uniqueness and does not fully
explain the asymmetric flattening observed at long $T_1$. These
limitations motivate a sequence-level mathematical analysis of the
matching objective.

The present work separates the empirical observation that the MRF objective is ``single-peaked, smooth, and monotonic'' into several precise mathematical questions: Is the objective smooth? Why does the ground-truth parameter maximize it? When is this maximizer locally strict? Under what conditions is it globally unique? And when does the objective decrease monotonically along rays emanating from the truth? Crucially, these properties do not hold with equal generality. Smoothness
follows from the analytic structure of the MR signal recursion described by the Bloch equation \cite{Bloch1946} or Extended Phase Graph (EPG)
\cite{Weigel2015}, whereas
local strictness depends on local identifiability. Global uniqueness and
global radial monotonicity are stronger, sequence- and target-dependent
properties. My goal is therefore not to assert universal unimodality,
but to identify the structural results that hold generally and to derive
finite, verifiable conditions under which the stronger geometric
properties hold for a specified MRF sequence.

Throughout, matching is performed in the full $N$-dimensional signal
space. Many contemporary MRF pipelines instead match after projection
onto a low-dimensional temporal subspace
\cite{McGivney2014,Asslander2018}. Accordingly, I extend the work presented in this paper to the compressed dictionary space in a companion paper \cite{PaperII}, addressing which of the results presented in this paper survive
such a fixed linear compression, how compression perturbs the
objective, and how the subspace dimension should be chosen.

%========================================================================
\section{Problem setup and notation}
\label{sec:setup}

An MRF sequence \cite{Ma2013} consists of $N$ repetitions, each comprising an RF excitation followed by data acquisition. The $n$-th excitation has a FA $\alpha_n$ and phase $\varphi_n$, and the corresponding repetition time is $TR_n$; all sequence parameters are fixed and known. For simplicity, I focus on the parameter
vector $\theta = (T_1,T_2)$, defined on a compact rectangle
$\Theta \subset (0,\infty)^2$; off-resonance $df$ is treated separately
and is not considered here. The results established for
$T_1/T_2$ should not be assumed to extend automatically to additional
Parameters as their encoding geometries may differ. Throughout, we write
\begin{equation}\label{eq:rates}
R_1 = \frac{1}{T_1}, \qquad
R_2 = \frac{1}{T_2}, \qquad
E_{1j} = \exp(-TR_j\,R_1), \qquad
E_{2j} = \exp(-TR_j\,R_2),
\end{equation}
and reserve $R = (R_1, R_2)$ for the reciprocal, or rate, coordinates.
The Bloch equation \cite{Bloch1946} or EPG
\cite{Weigel2015} simulated dictionary entry is the complex vector
\begin{equation}\label{eq:xu}
x(\theta) = \bigl( s_1(\theta), \dots, s_N(\theta) \bigr) \in \Cpx^{N},
\qquad
u(\theta) = \frac{x(\theta)}{\nm{x(\theta)}},
\end{equation}
where $s_j$ is the transverse magnetization sampled at the $j$-th echo.
The measured fingerprint can be simplified as $y = \rho\,x(\ta) + n$,
with $\ta$ the true parameters, $\rho \in \Cpx$ an unknown complex
proton density absorbing both the spin density and the receive phase,
and $n$ complex Gaussian noise. Following
\cite{Ma2013,Wang2018,Wang2020}, matching is scored by
\begin{equation}\label{eq:C}
C(\theta) = \bigl\lvert \ip{\ua}{u(\theta)} \bigr\rvert,
\qquad \ua := u(\ta),
\end{equation}
with $\ip{p}{q} = \sum_j \overline{p}_j q_j$ the Hermitian inner
product; the modulus removes the unknown phase of $\rho$ and the
normalization removes its magnitude. MRF dictionary searching maximizes
$C$ over $\Theta$. The CC-targeted MRF optimization can be equally described through the following loss
\begin{equation}\label{eq:L}
L(\theta) = 1 - C(\theta)^2 = \nm{\Pi_{u(\theta)}\,\ua}^2,
\qquad
\Pi_{v} := I - v v^{\hrm}.
\end{equation}
Equation~\eqref{eq:L} is the squared residual after optimizing over an
arbitrary complex scale factor. $\Pi$ means the orthogonal
projector onto the complex orthogonal complement of the current signal,
and $\nm{\cdot}$ means the
Euclidean or induced operator norm as appropriate. Unless stated
otherwise, all local statements are made for a fixed target $\ta$ in the
interior of $\Theta$.

\begin{remark}
By the Cauchy--Schwarz inequality, $0 \le C(\theta) \le 1$, with equality at the upper
bound if and only if $u(\theta)$ and $\ua$ are complex-collinear. At the solution, $C(\ta) = 1$, meaning that the true parameters always attain the largest possible
objective value. The substantive ``single-peak'' question is therefore
whether any other parameter value also attains this global maximum or constitutes a distinct local maximum.
\end{remark}

%========================================================================
\section{The structure of the MRF signal manifold}
\label{sec:manifold}

The remainder of the analysis relies on a structural property of the
MR signal recursion as can be modeled through Bloch/EPG: although the fingerprint depends nonlinearly on
$T_1$ and $T_2$, that dependence lies within a highly restricted class of functions.

\begin{lemma}[affine relaxation dependent magnetization recursion]
\label{lem:affine}
Let $m_j$ denote the magnetization state after the $j$-th TR: the vector
$(M_x, M_y, M_z)$ in the Bloch description, or the stacked configuration
states $(F^{+}_{k}, F^{-}_{k}, Z_k)_k$ in the EPG description. Then
\begin{equation}\label{eq:recursion}
m_j = D_j(\theta)\,A_j\,m_{j-1} + (1 - E_{1j})\,e,
\end{equation}
where $A_j$ collects the RF rotation, the off-resonance rotation and, in
the EPG case, the gradient shift, and depends only on the sequence and
not on $\theta$; $D_j(\theta)$ is diagonal with entries drawn from
$\{E_{1j}, E_{2j}\}$; and $e$ is a fixed vector selecting the
longitudinal component of order zero. Note that the observed signal is
the transverse component of the magnetization state, i.e.\
$s_j = P\,m_j$, where $P$ represents the linear projection onto the
measured complex transverse magnetization.
\end{lemma}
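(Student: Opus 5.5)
The plan is to derive the recursion directly from the standard operator-splitting description of one TR. In both the Bloch and the EPG model, a single repetition is the composition of a few operators. An instantaneous RF rotation $\mathcal{R}(\alpha_j,\varphi_j)$ acts first. Free precession by off-resonance follows; in EPG, so does the dephasing-gradient shift $\mathcal{S}$ of configuration orders. Relaxation over $TR_j$ is applied last. The RF rotation, precession and shift are linear maps fixed by the sequence, and they contain no relaxation time. I will collect them into $A_j$, after first checking that the ordering convention is immaterial (see below).

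The relaxation step over an interval $\tau=TR_j$ is obtained by solving the Bloch equations with the rotation terms removed. These equations decouple: $\dot M_{x,y} = -R_2 M_{x,y}$ and $\dot M_z = -R_1(M_z - M_0)$. With $M_0=1$ their exact solutions are
\begin{equation*}
M_{x,y}(\tau) = E_{2j}\,M_{x,y}(0), \qquad M_z(\tau) = E_{1j}\,M_z(0) + (1-E_{1j}).
\end{equation*}
In the Bloch description, this says relaxation acts as $m \mapsto D_j m + (1-E_{1j})e$, with $D_j=\diag(E_{2j},E_{2j},E_{1j})$ and $e=(0,0,1)^{\trn}$.

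In EPG, the configuration states are Fourier coefficients in the dephasing variable, and relaxation acts pointwise in space. Hence $F^\pm_k$ is multiplied by $E_{2j}$ and $Z_k$ by $E_{1j}$ for every $k$. Regrowth toward equilibrium affects only the spatially constant longitudinal component, so $Z_0 \mapsto E_{1j} Z_0 + (1-E_{1j})$. Again $D_j$ is diagonal with entries in $\{E_{1j},E_{2j}\}$, and $e$ selects $Z_0$. Composing the sequence operators with relaxation gives \eqref{eq:recursion}. Because the signal is read from the transverse states ($M_x+iM_y$, or $F^+_0$ at the echo), it is a fixed linear functional $P$ of $m_j$.

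The main obstacle is bookkeeping rather than analysis, and it concerns the placement of the operators. In real sequences the readout lies mid-TR, and gradients, precession and relaxation are interleaved rather than applied in sequence. I will handle this in three ways:
\begin{enumerate}
\item Relaxation and precession commute. Relaxation is diagonal, and precession is a rotation about $z$ (in EPG, a phase factor on each state). I can therefore reorder them freely.
\item Relaxation commutes with the gradient shift only up to the diagonal structure. The shift relabels $k$ but preserves the $F$/$Z$ type, so $\mathcal{S} D = D' \mathcal{S}$ with $D'$ still diagonal over $\{E_{1j},E_{2j}\}$. The $Z_0$ source term is unaffected, because the shift does not move $Z$ states.
\item A partial interval before the echo, for example $TE$, contributes factors $e^{-TE\,R_{1,2}}$. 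I absorb these either by redefining the TR boundaries at the echo times or by noting that $P$ may carry a fixed factor $e^{-TE R_2}$.
\end{enumerate}
If exact fidelity to the stated form requires it, I will remark that such mid-TR splitting only modifies $P$ and $D_j$ within the same exponential class. Finally, I will note that the EPG state space is infinite but can be truncated at the maximal order reachable after $N$ shifts, so all operators are finite matrices.
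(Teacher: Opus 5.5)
The core of your argument is the paper's proof, with more detail. The RF rotation, precession and (in EPG) the gradient shift are $\theta$-free linear maps collected into $A_j$. Relaxation is $\diag(E_{2j},E_{2j},E_{1j})$ in the Bloch case, or $E_{2j}$ on every $F$ state and $E_{1j}$ on every $Z$ state in EPG, and regrowth $(1-E_{1j})$ enters only $Z_0$. That part is correct. Your point 2 in fact gives exact commutation, $\mathcal{S}D_j=D_j\mathcal{S}$, because $D_j$ is uniform on each state type.

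The weak spot is the readout clause. You rightly flag it (the paper's proof passes over it), but your point 3 does not repair it.

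\emph{Why the literal clause fails.} Take $m_j$ at the end of $TR_j$, which is where the recursion has exactly the stated form. For $TE<TR_j$, $s_j$ is then not a fixed linear functional of $m_j$, and neither remedy changes this.

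\emph{Remedy (a).} Moving the TR boundaries to the echoes puts a relaxation segment of length $TR_{j-1}-TE$ before the pulse. The RF operator $\mathcal{R}_j=\mathcal{R}(\alpha_j,\varphi_j)$ mixes $Z$ and $F$ states, so that diagonal factor cannot be moved through it. Write $D(\tau)$ for relaxation over an interval $\tau$ and $E_1(\tau)=e^{-\tau R_1}$. The linear part becomes $D(TE)\,\mathcal{R}_j\,D(TR_{j-1}-TE)\cdots$. The regrowth acquires the term $(1-E_1(TR_{j-1}-TE))\,D(TE)\,\mathcal{R}_je$, which is not a multiple of $e$.

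\emph{Remedy (b).} Putting an exponential into $P$ makes $P$ depend on $\theta$, whereas Corollary~\ref{cor:commensurate} uses a constant $P$. Relative to the end-of-TR state the factor would also have to be $e^{+(TR_j-TE)R_2}$, not $e^{-TE\,R_2}$. In a spoiled sequence the echo has by then moved from $F^+_0$ to $F^+_1$.

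\emph{The correct statement.} Keep $m_j$ at the end of $TR_j$ and write the readout as $s_j=e^{-TE\,R_2}\,P_0\,\mathcal{R}_j\,m_{j-1}$, with $P_0$ fixed (absorbing precession over $TE$). The recursion then holds exactly. Sub-interval relaxations compose to $D_j$ and $(1-E_{1j})e$ because $D(\tau)e=E_1(\tau)e$, and the shift and precession fix $e$ and commute with $D$. This readout is linear in $m_{j-1}$ with a single scalar exponential. It is what the degree $d_j$ in \eqref{eq:exactdeg} presupposes. The literal form $s_j=Pm_j$ with constant $P$ holds only if the echo is taken at the end of the TR.

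A minor point, shared with the paper: in paired $(F^+_k,F^-_k)$ storage the shift sets $F^+_0$ to the conjugate of the shifted $F^-_0$. So $A_j$ is real-linear rather than complex-linear. This is harmless here because $D_j$ is real.
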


\begin{proof}
In the Bloch description this is exactly the update of equations
(A2)--(A6) of \cite{Wang2018}: relaxation over one TR is the diagonal
map $\diag(E_{2j}, E_{2j}, E_{1j})$ followed by the addition of
$(1 - E_{1j})$ along $z$, while the RF pulse and the off-resonance
precession are rotations, which do not involve $T_1$ or $T_2$. In the
EPG description the RF mixing operator $T(\alpha_j, \varphi_j)$ and the
gradient shift operator act linearly on the configuration states with
sequence-determined coefficients only; relaxation multiplies every
transverse configuration by $E_{2j}$ and every longitudinal
configuration by $E_{1j}$, and regrowth adds $(1 - E_{1j})$ to the
zero-order longitudinal state alone. In both cases the
$\theta$-dependence is confined to the diagonal factor and to the scalar
offset, which is the assertion.
\end{proof}

\begin{proposition}[exponential-polynomial construct of MR signal recursion]
\label{prop:expoly}
For every $j$ there exist a finite number of complex coefficients
$c_{jk}$, determined by the sequence alone and the fixed initial magnetization state, and non-negative exponents
$a_{jk}$, $b_{jk}$, each equal to a sum of TRs, such that
\begin{equation}\label{eq:sj}
s_j(R_1, R_2)
 = \sum_k c_{jk}\, \exp\bigl( -a_{jk} R_1 - b_{jk} R_2 \bigr).
\end{equation}
In particular each $s_j$ extends to an entire function of
$(R_1, R_2) \in \Cpx^{2}$.
\end{proposition}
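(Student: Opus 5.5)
The plan is an induction on $j$ using the affine recursion of Lemma~\ref{lem:affine}. The induction hypothesis is stronger than the claim: every component of the state vector $m_j$ (Bloch components, or each configuration state $F^{\pm}_k$, $Z_k$ in EPG) is a finite sum
\[
\sum_k \gamma_k \exp(-a_k R_1 - b_k R_2),
\]
where the $\gamma_k \in \Cpx$ depend only on the sequence and on $m_0$, and each $a_k, b_k$ is a non-negative sum of TRs (the empty sum $0$ allowed). Call this class $\mathcal{E}_j$. Since $s_j = P\,m_j$ is a fixed linear combination of components, the claim for $s_j$ then follows immediately.

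\emph{Base case.} The initial state $m_0$ is fixed and independent of $\theta$, so each component is a constant, i.e.\ a single term with $a=b=0$.

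\emph{Inductive step.} Suppose every component of $m_{j-1}$ lies in $\mathcal{E}_{j-1}$.
\begin{itemize}
\item Applying $A_j$ forms sequence-determined linear combinations of components. The class is closed under $\Cpx$-linear combinations, and merging or relabeling terms is harmless.
\item Multiplying by a diagonal entry $E_{1j}=e^{-TR_j R_1}$ or $E_{2j}=e^{-TR_j R_2}$ of $D_j(\theta)$ shifts each exponent $a_k\mapsto a_k+TR_j$ or $b_k\mapsto b_k+TR_j$. These remain non-negative sums of TRs.
\item The regrowth term $(1-E_{1j})e$ adds the two terms $1 = e^{0}$ and $-e^{-TR_j R_1}$ to one component.
\end{itemize}
Finiteness is preserved because each step at most doubles the number of terms, giving at most exponentially many in $j$. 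Hence every component of $m_j$ lies in $\mathcal{E}_j$. In the EPG case, the state is finite-dimensional for each finite $j$ (configuration orders up to $j$), so only finitely many components are involved.

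\emph{Entire extension.} Each $\exp(-a R_1 - b R_2)$ is entire on $\Cpx^2$, being the exponential of a linear form. A finite sum of entire functions is entire, so the formula \eqref{eq:sj} defines the extension of $s_j$ to $\Cpx^2$.

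The main obstacle is bookkeeping rather than analysis. One must check that $A_j$ truly has no $\theta$-dependence (it includes off-resonance, which is held fixed), so that the coefficients $c_{jk}$ depend only on the sequence and $m_0$. One must also make sure the EPG state is treated as a finite vector at each step, so that the sum stays finite. Both points are already supplied by Lemma~\ref{lem:affine}.
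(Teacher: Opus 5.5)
Your proposal is correct and follows essentially the same route as the paper. The paper unrolls the recursion of Lemma~\ref{lem:affine} and expands the products $D_jA_j\cdots D_1A_1$ term by term; your induction, with the strengthened hypothesis on every component of $m_j$, carries out the same expansion one step at a time, and the entire-extension step is identical. Your count ``each step at most doubles the number of terms'' is slightly loose, since the regrowth term can add two new exponents and $A_j$ pools terms from several components, but only finiteness is needed, so the argument is unaffected.
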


\begin{proof}
Unrolling the recursion \eqref{eq:recursion} gives
\[
m_n = \Bigl( \prod_{j=n}^{1} D_j A_j \Bigr) m_0
    + \sum_{i=1}^{n} \Bigl( \prod_{j=n}^{i+1} D_j A_j \Bigr)
      (1 - E_{1i})\, e .
\]
Expanding the matrix products, each resulting term is a product of one
diagonal entry from each $D_j$, times a constant built from the $A_j$.
Each such diagonal entry is either $E_{1j} = \exp(-TR_j R_1)$ or
$E_{2j} = \exp(-TR_j R_2)$, so their product is $\exp(-a R_1 - b R_2)$
with $a$ and $b$ sums of TRs. The scalar factor $(1 - E_{1i})$
contributes the two terms $1$ and $-\exp(-TR_i R_1)$, of the same form.
Since the number of terms is finite, the sum is a finite exponential
sum, and finite sums of exponentials of linear forms are entire (complex-differentiable (holomorphic) everywhere in its complex domain).
\end{proof}

Indeed, proposition~\ref{prop:expoly} provides the structural foundation for the
analysis in this paper. It tells that, for any fixed MRF schedule
(considering $T_1$ and $T_2$ here only), each measured signal sample
belongs to a finite exponential-polynomial family in the relaxation
rates $R_1$ and $R_2$. No regularity of the FA, RF-phase, or TR schedule
is required. Pseudo-randomization therefore does not conflict with the
analysis: it changes the coefficients and exponents of the resulting
exponential polynomial, but not its analyticity. The role of sequence
design is different. Through distinguishable MR signal time course perturbations induced by the incoherent $T_1$ and $T_2$ relaxations, an informative MR excitation and relaxation
schedule promotes local identifiability; this geometric condition is
formalized later through the information matrix $G$. The structural
results of Section~\ref{sec:smooth}, however, require only the Bloch/EPG
recursion and hold independently of how the schedule was designed. Coefficients a and b in Eq.~\ref{eq:sj} determine how much relaxation-history complexity can accumulate by the time $s_j$ is generated. Their sum can be called the degree.

\begin{corollary}[polynomial structure for commensurate TRs]
\label{cor:commensurate}
Suppose all sequence timings share a common time base: $TR_j = n_j \Delta$
with $n_j$ positive integers, and likewise for $TE$ and any preparation
interval. This holds whenever scanner timing is quantized to a finite
raster, and places no restriction on pseudo-random variation of the $n_j$;
one may always take $\Delta = \gcd_j TR_j$ ($\gcd$ means greatest common divisor). Define $q_1 = \exp(-\Delta R_1)$
and $q_2 = \exp(-\Delta R_2)$. Then each signal sample $s_j$ lies in
$\Cpx[q_1,q_2]$, with coefficients determined by the FA and RF-phase
schedule alone, and total degree at most $N_j = \sum_{l \le j} n_l$.
For equal TRs ($n_j \equiv n$) one may re-base to $\Delta = TR$, giving
$(E_1,E_2) = (q_1^{n}, q_2^{n})$ and total degree at most $j$ in
$(E_1,E_2)$.

\emph{Attainment.} The bound $N_j$ counts every interval up to and
including $TR_j$, whereas the $j$-th sample is acquired at an echo time
$TE = n_{TE}\Delta$ after the $j$-th excitation, so that only
$TR_1,\dots,TR_{j-1}$ and $TE$ have elapsed. Hence
\begin{equation}\label{eq:exactdeg}
\deg s_j \;\le\; d_j := \sum_{l<j} n_l + n_{TE} \;\le\; N_j ,
\end{equation}
and the second inequality is strict unless $n_{TE} = n_j$. In particular,
for equal TRs sampled at $TE = 0$ the attainable degree is $j-1$, not $j$.
Direct evaluation of the recursion in polynomial arithmetic confirms that
the first inequality in \eqref{eq:exactdeg} is an equality for the
schedules examined, so that $d_j$ is the exact degree while $N_j$ is a
valid but non-attained upper bound whenever $TE < TR_j$.
\end{corollary}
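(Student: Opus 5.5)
I will derive the corollary directly from Proposition~\ref{prop:expoly} and Lemma~\ref{lem:affine}, tracking exponents as integer multiples of $\Delta$. Under the hypothesis $TR_l = n_l\Delta$ (and $TE = n_{TE}\Delta$, preparation intervals likewise), every factor appearing when the recursion \eqref{eq:recursion} is unrolled is either a sequence-only constant from some $A_l$, or one of $E_{1l} = q_1^{n_l}$, $E_{2l} = q_2^{n_l}$, or $(1-E_{1l}) = 1 - q_1^{n_l}$. Each is a polynomial in $(q_1,q_2)$ with coefficients independent of $\theta$. Since $\Cpx[q_1,q_2]$ is a ring, it is closed under the matrix products and sums in the unrolled formula. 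Applying the linear map $P$ then shows $s_j\in\Cpx[q_1,q_2]$, with coefficients depending only on the $A_l$ (FA, RF-phase, gradient/off-resonance schedule) and $m_0$. Equivalently, each exponent pair $(a_{jk},b_{jk})$ in \eqref{eq:sj} is a sum of TRs, hence a nonnegative integer multiple of $\Delta$, so $\exp(-a R_1 - bR_2) = q_1^{a/\Delta}q_2^{b/\Delta}$.

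For the degree bound, I will prove by induction on $j$ that every entry of $m_j$ (the state after the full $j$-th interval) has total degree at most $N_j=\sum_{l\le j} n_l$. The base case is $m_0$, which is constant. For the inductive step, $D_jA_jm_{j-1}$ multiplies each entry of $A_jm_{j-1}$, which has degree $\le N_{j-1}$ since $A_j$ is constant, by a monomial $q_1^{n_j}$ or $q_2^{n_j}$. This raises the degree by exactly $n_j$ per term, because the relaxation factor is a single monomial of degree $n_j$ regardless of whether it is the $R_1$ or $R_2$ factor. The offset $1-q_1^{n_j}$ has degree $n_j\le N_j$. The key observation is that total degree, not separate degrees in $q_1$ and $q_2$, is what is additive here, since each interval contributes exactly one monomial factor of degree $n_l$. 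For equal TRs, I substitute $E_i=q_i^n$: every monomial arising is a product of $E$-factors, so it is a polynomial in $(E_1,E_2)$ of degree $\le j$ by the same induction with unit steps.

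For the attainment refinement, I will note that the sample $s_j$ is read at time $TE$ after the $j$-th pulse. It is therefore $P$ applied to the state obtained from $m_{j-1}$ by $A_j$ followed by a partial relaxation $\diag$ with entries $q_i^{n_{TE}}$ plus regrowth $1-q_1^{n_{TE}}$. Rerunning the inductive step with $n_{TE}$ in place of $n_j$ gives $\deg s_j\le d_j$. Then $d_j\le N_j$ iff $n_{TE}\le n_j$, with equality iff $n_{TE}=n_j$. For equal TRs and $TE=0$ this gives $d_j=(j-1)n$, i.e.\ degree $j-1$ in $(E_1,E_2)$.

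The main obstacle is the claim that $d_j$ is attained. It need not hold in general, since coefficients can cancel, for instance with zero flip angles. I will therefore not claim it as a theorem. Instead I will treat it as the stated empirical check: run the recursion in exact polynomial arithmetic for the examined schedules and verify that the top-degree coefficient is nonzero. As a supporting heuristic, I will point to the monomial $q_2^{d_j}$ produced by the pure-transverse path, whose coefficient is a product of $\sin\alpha_l$-type factors and is generically nonzero.
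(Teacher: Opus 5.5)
Your proposal is correct and follows essentially the same route as the paper. Both arguments run an induction on $j$ tracking total degree: each relaxation factor is a monomial of degree $n_j$, $A_j$ is constant, the offset $1-q_1^{n_j}$ has degree $n_j$, and $m_0$ is constant. Both then rerun the final step at the sampling instant with $n_{TE}$ to obtain $d_j$, and, like the paper, you leave exactness of $d_j$ to direct polynomial evaluation.

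One caution concerns only your supporting heuristic. In gradient-spoiled EPG (e.g.\ FISP), a purely transverse path undergoes $j-1$ unit dephasing shifts, so it cannot return to $F_0$ when $j$ is even. There the coefficient of $q_2^{d_j}$ vanishes identically, and the top degree is carried by mixed monomials $q_1^{a}q_2^{b}$ with $a+b=d_j$ and $a>0$.
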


\begin{proof}
Induction on $j$ using \eqref{eq:recursion}. Each diagonal entry of $D_j$
is $E_{1j} = q_1^{n_j}$ or $E_{2j} = q_2^{n_j}$, and $A_j$ is
independent of $q_1$, $q_2$. Hence, multiplication by $D_jA_j$ increases total degress by at most $n_j$: $\deg(D_jA_j m_{j-1}) \leq n_j + \deg m_{j-1}$. The offset $(1-E_{1j})e$ has degree at most $n_j$. With $\deg m_0 = 0$, induction gives
$\deg m_j \leq \sum_{l=1}^{j} n_l = N_j$, and therefore $\deg s_j \leq N_j$ because $s_j = Pm_j$ with $P$ constant. Equivalently, in
the unrolled form the homogeneous term
$(\prod_{l=j}^{1}D_lA_l)m_0$ has degree at most $\sum_{l=1}^{j} n_l = N_j$
and the term beginning at index $i$ has degree at most $\sum_{l=i}^{j}n_l
\leq N_j$. The upper bound need not be attained because sequence-dependent cancellations or zero coefficients may reduce the actual degree. In the equal-TR case, taking $\Delta=TR$ gives $E_1=q_1, E_2=q_2$, and deg $s_j\leq j$.
The refinement \eqref{eq:exactdeg} follows from the same induction started
at the sampling instant: the $j$-th readout occurs before the relaxation
interval $TR_j$ has elapsed, so the relaxation factors available to $s_j$
are exactly those of $TR_1,\dots,TR_{j-1}$ together with $TE$.
\end{proof}

\begin{remark}
Corollary~\ref{cor:commensurate} converts the exponential-polynomial
representation of Proposition~\ref{prop:expoly} into an ordinary
bivariate polynomial representation whenever the sequence timings are commensurate.
Thus, for any fixed finite MRF schedule, the signal belongs to a finite-dimensional
analytic family even for pseudo-random, non-uniform schedules. The
corollary determines the function class, not the polynomial
coefficients: those coefficients remain sequence dependent and are
controlled by the FA, RF-phase, and TR schedule. Consequently, while results that depend only on analyticity hold broadly, strict local identifiability and global unimodality require additional sequence-dependent conditions. Both the equal-TR and the non-uniform commensurate cases have been checked
numerically: in each the signal is reproduced to machine precision by a
polynomial in $(q_1, q_2)$ of the degree $d_j$ predicted by
\eqref{eq:exactdeg}.
\end{remark}
%========================================================================
\section{Smoothness of the matching objective}
\label{sec:smooth}
My previous heuristic assessment showed the smoothness of the objective function. To provide analytic derivations of the unimodality and other properties of the objective, we need the real-analyticity property. A real-analytic function is considerably more restrictive than a merely
smooth function: locally it is represented exactly by its convergent
Taylor series. Consequently, it cannot possess isolated corners, cusps,
or nontrivial open plateaus.

\begin{theorem}[real analyticity of MRF matching objective]\label{thm:analytic}
On the open set $\Theta^{\circ} = \{\theta \in \Theta : x(\theta) \neq
0\}$, the squared objective $C^2$ and the loss $L$ are real-analytic
functions of $(R_1, R_2)$. The objective $C$ itself is real-analytic at
every point where $C > 0$.
\end{theorem}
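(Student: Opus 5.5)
The plan is to build everything from Proposition~\ref{prop:expoly}, restricting its entire functions to real arguments and then using closure properties of real-analytic functions. First, by Proposition~\ref{prop:expoly}, each $s_j$ is the restriction to $\mathbb{R}^2$ of an entire function of $(R_1,R_2)$. Hence its real and imaginary parts are real-analytic on the real rate domain. These are $\Rea s_j$ and $\Ima s_j$; equivalently, $\overline{s_j(R)} = \sum_k \overline{c_{jk}} \exp(-a_{jk}R_1 - b_{jk}R_2)$ for real $R$, which is again an exponential sum.

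It follows that $\nm{x(\theta)}^2 = \sum_j s_j\overline{s_j}$ is real-analytic. The same holds for $\ip{x(\ta)}{x(\theta)} = \sum_j \overline{s_j(\ta)}\, s_j(\theta)$, because $\ta$ is fixed so the first factors are constants. Consequently $N(\theta) := \ab{\ip{x(\ta)}{x(\theta)}}^2$ is real-analytic, being the product of this inner product with its conjugate. The map $\theta \mapsto R$, given by $(T_1,T_2)\mapsto(1/T_1,1/T_2)$, is itself real-analytic with a real-analytic inverse on $(0,\infty)^2$. Therefore analyticity in $R$ and in $\theta$ are interchangeable, and I will work in $R$ as the statement does.

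Next I write
\[
C(\theta)^2 = \frac{N(\theta)}{\nm{x(\ta)}^2\,\nm{x(\theta)}^2}.
\]
The denominator is real-analytic and strictly positive on $\Theta^{\circ}$. Here $\nm{x(\ta)}>0$ is implicitly assumed, since otherwise $\ua$ is undefined. The quotient of real-analytic functions with nonvanishing denominator is real-analytic, so $C^2$ is real-analytic on $\Theta^{\circ}$. Then $L = 1 - C^2$ is real-analytic as well. The identity $L = \nm{\Pi_{u(\theta)}\ua}^2$ in \eqref{eq:L} needs no separate treatment. I also note that $\Theta^{\circ}$ is open because $\nm{x}^2$ is continuous.

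For $C$ itself I use $C = \sqrt{C^2}$. The function $t\mapsto\sqrt{t}$ is real-analytic on $(0,\infty)$, and compositions of real-analytic maps are real-analytic. Hence $C$ is real-analytic on the open set $\{C>0\}\cap\Theta^{\circ}$. I would remark that at zeros of $C$ analyticity can fail, as with $\ab{t}$, which explains the restriction in the statement.

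The main obstacle is the justification of the closure facts: sums, products, quotients by nonvanishing functions, and composition with $\sqrt{\cdot}$. The cleanest route is to stay in the complex domain where possible. The function $\overline{s_j(\bar R)}$ is entire, so $F(R) := \sum_j s_j(R)\overline{s_j(\bar R)}$ is an entire function that agrees with $\nm{x}^2$ on real $R$. The same construction works for $N$. On real points where these are nonzero, the quotient is holomorphic in a complex neighborhood, and its restriction is real-analytic. For the square root I would invoke holomorphy of the principal branch near the positive real axis. This avoids quoting real-analytic closure theorems beyond the standard fact that restrictions of holomorphic functions to real slices are real-analytic.
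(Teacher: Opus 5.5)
Your proposal is correct and follows the same route as the paper. It uses Proposition~\ref{prop:expoly} to get real-analytic $s_j$, hence real-analytic $\nm{x}^2$ and squared modulus of the inner product, then applies the quotient rule with a strictly positive denominator and composes with $\sqrt{\cdot}$ on $(0,\infty)$. Pairing with $x(\ta)$ instead of $\ua$, and the optional complexification via the entire function $\overline{s_j(\bar R)}$ in place of the paper's citation of the real-analytic quotient theorem (Krantz--Parks), are only cosmetic variations.
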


\begin{proof}
By Proposition~\ref{prop:expoly} each $s_j$
is entire, hence real-analytic in $(R_1, R_2)$, and so are its real and
imaginary parts. The denominator $\nm{x}^2 = \sum_j \ab{s_j}^2=\sum_j\left[
(\Rea s_j)^2+(\Ima s_j)^2 \right]$ is therefore real-analytic, and it is strictly positive on $\Theta^{\circ}$. The
numerator of $C^2$ is $\ab{z}^2 = (\Rea z)^2 + (\Ima z)^2$ with
$z = \sum_j \overline{u}_{a j}\, s_j(\theta)$ and real-analytic. A
quotient of real-analytic functions with non-vanishing denominator is
real-analytic \cite{Krantz2002}, giving the claim for $C^2$ and hence
for $L = 1 - C^2$. Finally $C = \sqrt{C^2}$ and the square root is
analytic on $(0,\infty)$.
\end{proof}

Theorem~\ref{thm:analytic} formalizes the smoothness observed
numerically in \cite{Wang2018,Wang2020}. Within $\Theta^\circ$, the only
possible loss of smoothness of $C$ occurs where $C=0$; at points where
$x=0$, the normalized fingerprint itself is undefined. In particular,
the true target satisfies $C(\ta)=1$, so $C$ is real analytic in a
neighborhood of the true peak. Analyticity immediately yields two useful
consequences concerning plateaus and stationary regions. It does not,
however, imply that all stationary points are isolated or finite in
number.

\begin{corollary}[MRF objective has no plateaus, no open stationary regions]
\label{cor:noplateau}
Assume $C^2$ is non-constant on the connected component of
$\Theta^{\circ}$ containing the target. Then no level set of $C^2$
contains an open subset. Moreover, its critical set
$\{\nabla C^2 = 0\}$ contains no open subset unless $C^2$ is constant.
\end{corollary}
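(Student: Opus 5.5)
The argument rests on the identity theorem for real-analytic functions. Let $U$ be the connected component of $\Theta^{\circ}$ that contains $\ta$. By Theorem~\ref{thm:analytic}, $f := C^2$ is real-analytic on $U$, viewed as an open connected subset of the $(R_1,R_2)$-plane. The map $(T_1,T_2)\mapsto(R_1,R_2)$ is a real-analytic diffeomorphism of $(0,\infty)^2$, so open sets, connectedness and analyticity carry over unchanged between the two coordinate systems. We first treat the interior of $\Theta$; boundary points of the compact rectangle contribute no open subsets of the plane, so they play no role in the claim.

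\emph{Level sets.} Suppose, for contradiction, that a level set $\{f = c\}$ contains a nonempty open set $V \subset U$. Then the function $g := f - c$ is real-analytic on $U$ and vanishes identically on $V$. At any point $p \in V$, every partial derivative of $g$ of every order is zero. Since $g$ equals its convergent Taylor series near each point, the set $W = \{p \in U : \partial^\beta g(p) = 0 \ \forall \beta\}$ has three properties. It is open, because the Taylor series vanishes on a ball. It is closed in $U$, because it is an intersection of zero sets of continuous functions. It is nonempty, because it contains $V$. Connectedness of $U$ then forces $W = U$, so $f \equiv c$ on $U$. This contradicts the hypothesis that $f$ is non-constant.

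\emph{Critical set.} Now suppose $\{\nabla f = 0\}$ contains an open set $V$, which we may take to be a ball and hence connected. On $V$ the function $f$ has zero gradient, so it is constant, say $f \equiv c$. Then $V$ lies in the level set $\{f = c\}$, which by the previous step forces $f$ to be constant on $U$. Alternatively, one can apply the identity theorem directly to each component $\partial_{R_i} f$, which is also real-analytic, to conclude $\nabla f \equiv 0$ on $U$ and hence that $f$ is constant.

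The only delicate point is the bookkeeping of domains. The identity theorem needs connectedness, which is why the statement restricts to the component containing the target. We must also check that $\Theta^{\circ}$ is open. This holds because $x$ is continuous, so $\{x \ne 0\}$ is open, after intersecting with the interior of $\Theta$. No property specific to MRF is used beyond Theorem~\ref{thm:analytic}. Finally, the non-constancy hypothesis is essentially automatic in practice: $f(\ta) = 1$, so constancy would mean every $u(\theta)$ is collinear with $\ua$. In any case we keep it as an assumption, as the statement does.
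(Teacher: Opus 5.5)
Your proof is correct and takes the paper's route: both rest on the identity theorem for real-analytic functions on the connected component containing $\theta_{a}$. You go slightly further by proving that theorem via the open--closed--nonempty argument, and by noting that $\Theta^{\circ}$ is open only after intersecting with the interior of $\Theta$. For the critical set, the paper applies the identity theorem directly to $\partial C^2/\partial R_1$ and $\partial C^2/\partial R_2$, which is your stated alternative; your primary reduction (zero gradient on a ball forces a constant value there, then invoke the level-set case) is an equally valid minor variation.
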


\begin{proof}
By the identity theorem for real-analytic functions, if a real-analytic
function is constant on a set with nonempty interior, then it is
constant on the connected component of its domain containing that set.
Since $C^2(\ta) = 1$ and $C^2 < 1$ somewhere on the component for a
non-degenerate sequence, $C^2$ is non-constant and no level set contains
an open ball. The critical set is the common zero set of the
real-analytic functions $\partial C^2/\partial R_1$ and
$\partial C^2/\partial R_2$. According to the identity theorem, unless $C^2$ is constant (both derivatives
vanish identically), this common zero set has empty interior. The
critical set may still consist of isolated points or analytic curves.
Thus analyticity excludes flat regions but does not, by itself,
guarantee that all stationary points are isolated or finite in number.
\end{proof}

The absence of open plateaus is practically relevant because a
coarse-to-fine search cannot be guided by objective variation on a
region where the objective is exactly constant. Analyticity also
supports local gradient estimates near an isolated non-degenerate peak.
It does not, however, imply that the full two-dimensional critical set
is finite, and no such claim is needed below.

The next theorem provides the corresponding one-dimensional statement:
along a non-degenerate parameter slice, the exponential-polynomial
structure prevents arbitrarily oscillatory behavior.

\begin{theorem}[MRF objective has finitely many stationary points along a non-degenerate slice]
\label{thm:finitestat}

Fix $T_2$ and regard $C^2$ as a function of $R_1$ alone. Call the slice
non-degenerate if $C^2(\cdot,T_2)$ is not identically constant. Write

\begin{equation}
C^2(R_1,T_2)
=
\frac{P(R_1)}{Q(R_1)},
\label{eq:C2slice}
\end{equation}

where

\begin{equation}
P(R_1)
=
\ab{\ip{\ua}{x(R_1,T_2)}}^2,
\qquad
Q(R_1)
=
\nm{x(R_1,T_2)}^2.
\label{eq:PQslice}
\end{equation}

Then

\begin{equation}
\frac{d}{dR_1}C^2(R_1,T_2)
=
\frac{
P'(R_1)Q(R_1)-P(R_1)Q'(R_1)
}{
Q(R_1)^2
},
\label{eq:C2slice_derivative}
\end{equation}

and the numerator

\begin{equation}
H(R_1)
:=
P'(R_1)Q(R_1)-P(R_1)Q'(R_1)
\label{eq:Hslice}
\end{equation}

is a finite exponential sum in $R_1$. After collecting equal exponents,
write

\begin{equation}
H(R_1)
=
\sum_{r=1}^{M}
d_r \exp(-\lambda_r R_1),
\qquad
d_r\neq 0,
\qquad
\lambda_1<\cdots<\lambda_M.
\label{eq:Hexp}
\end{equation}

Here $M$ is the number of distinct exponents remaining after
collection; it is unrelated to the number of repetitions $N$.
Then $H$ has at most $M-1$ real zeros, counted with multiplicity.
Consequently, $C^2(\cdot,T_2)$ has at most $M-1$ stationary points on
the admissible $R_1$ interval.

When the TRs are commensurate, let

\begin{equation}
q_1=\exp(-\Delta R_1).
\label{eq:q1slice}
\end{equation}

Then $P$ and $Q$ are polynomials in the real variable $q_1$. If

\begin{equation}
\deg P=p,
\qquad
\deg Q=q,
\label{eq:PQdegrees}
\end{equation}

then the stationary points are the zeros, in the admissible $q_1$
interval, of the polynomial

\begin{equation}
\frac{dP}{dq_1}Q
-
P\frac{dQ}{dq_1},
\label{eq:Hpoly}
\end{equation}

whose degree is at most $p+q-1$. Hence
$C^2(\cdot,T_2)$ has at most $p+q-1$ stationary points on the slice.

\end{theorem}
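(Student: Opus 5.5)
The proof has three parts: show that $P$, $Q$ and $H$ lie in a class of real exponential sums closed under the needed operations, bound the zeros of such sums by a generalized Descartes/Rolle argument, and then specialize to the polynomial case.

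\textbf{Step 1: closure of the function class.} By Proposition~\ref{prop:expoly}, with $T_2$ fixed, each $s_j(R_1,T_2)$ is a finite sum $\sum_k \gamma_{jk}\exp(-a_{jk}R_1)$ with complex $\gamma_{jk}$ and real exponents $a_{jk}\ge 0$.
\begin{itemize}
\item For real $R_1$ we have $\overline{s_j}=\sum_k\overline{\gamma_{jk}}\exp(-a_{jk}R_1)$. Therefore $Q=\sum_j s_j\overline{s_j}$ and $P=z\overline{z}$, with $z=\sum_j\overline{u}_{aj}s_j$, are finite sums $\sum c\,\exp(-\mu R_1)$ with real exponents $\mu=a+a'$.
\item The coefficients after collection are real, since $P$ and $Q$ are real-valued and the exponentials with distinct exponents are linearly independent.
\item Differentiation multiplies each term by $-\mu$, so $P'$ and $Q'$ stay in the class.
\item Products of such sums stay in the class, since exponents add.
\end{itemize}
Hence $H=P'Q-PQ'$ is a finite real exponential sum. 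Collecting equal exponents gives \eqref{eq:Hexp} with real $d_r\neq 0$. If $H\equiv 0$ then $C^2$ is constant on the slice, which the non-degeneracy hypothesis excludes, so $M\ge 1$.

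\textbf{Step 2: zero count for exponential sums.} I will prove by induction on $M$ that a real sum $\sum_{r=1}^M d_r e^{-\lambda_r t}$ with nonzero $d_r$ and distinct real $\lambda_r$ has at most $M-1$ real zeros counted with multiplicity. This is the classical Pólya--Szegő generalization of Descartes' rule.
\begin{itemize}
\item \emph{Base case.} For $M=1$ the sum is a single nonvanishing exponential and has no zeros.
\item \emph{Inductive step.} Multiply by $e^{\lambda_1 t}$; this does not change zeros or multiplicities. Differentiate to get $\sum_{r\ge2} d_r(\lambda_1-\lambda_r)e^{-(\lambda_r-\lambda_1)t}$, which has $M-1$ terms with nonzero coefficients and distinct exponents.
\item \emph{Multiplicity-aware Rolle.} If $g$ has zeros of total multiplicity $K$, then $g'$ has zeros of total multiplicity at least $K-1$. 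A zero of multiplicity $m$ of $g$ is a zero of multiplicity $m-1$ of $g'$, and Rolle gives an additional zero strictly between consecutive distinct zeros.
\item By induction, $K-1\le M-2$, so $K\le M-1$.
\end{itemize}
Since $Q>0$ on the admissible interval, the stationary points of $C^2(\cdot,T_2)$ are exactly the zeros of $H$ there. This gives the bound $M-1$.

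\textbf{Step 3: the commensurate case.} By Corollary~\ref{cor:commensurate}, each $s_j$ is a polynomial in $q_1$ with complex coefficients. Since $q_1$ is real, $\overline{s_j}$ is the polynomial with conjugated coefficients, so $P$ and $Q$ are real polynomials in $q_1$ of degrees $p$ and $q$.

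The map $R_1\mapsto q_1=\exp(-\Delta R_1)$ is a diffeomorphism with $dq_1/dR_1=-\Delta q_1\ne0$. By the chain rule, $dC^2/dR_1=-\Delta q_1\,(dC^2/dq_1)$, so stationary points in $R_1$ correspond bijectively to stationary points in $q_1$ on the admissible interval. These are the zeros of \eqref{eq:Hpoly} over $Q^2$.

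For the degree bound, $\frac{dP}{dq_1}Q-P\frac{dQ}{dq_1}$ has degree at most $(p-1)+q$. The leading coefficients of the two products are $p\,c_Pc_Q$ and $q\,c_Pc_Q$, so the degree $p+q-1$ term may even cancel when $p=q$; this only strengthens the bound. The polynomial is not identically zero by non-degeneracy, so it has at most $p+q-1$ real roots.

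\textbf{Main obstacle.} The main difficulty is the bookkeeping that makes Step 2 apply:
\begin{itemize}
\item verifying that after conjugation and collection the coefficients $d_r$ are real, so that real Rolle is legitimate, which Step 1 handles via real-valuedness and linear independence;
\item ensuring $M$ is counted after cancellations;
\item stating the Rolle step carefully so that multiplicities are handled correctly.
\end{itemize}
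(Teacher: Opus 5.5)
Your proposal is correct and follows the paper's proof essentially step for step. It uses closure of finite exponential sums under conjugation, products and differentiation, the quotient rule with $Q>0$, the zero bound for $H$, and the substitution $q_1=\exp(-\Delta R_1)$ with the degree count $p+q-1$. The only difference is in the zero bound: you prove it directly by the multiplicity-aware Rolle induction, and you also check that the collected coefficients $d_r$ are real. The paper instead cites the extended Chebyshev-system property of $\{\exp(-\lambda_r R_1)\}$ from P\'{o}lya--Szeg\H{o}, and that property rests on exactly this Rolle argument.
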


\begin{proof}

By Proposition~\ref{prop:expoly}, after fixing $T_2$, each signal
component can be written as a finite exponential sum

\begin{equation}
s_j(R_1,T_2)
=
\sum_k
c_{jk}\exp(-a_{jk}R_1),
\label{eq:slice_exp_sum}
\end{equation}

where the coefficients $c_{jk}$ may depend on the fixed value of $T_2$
and on the sequence, while the exponents $a_{jk}$ are real and
non-negative. Since $\ua$ is fixed,

\begin{equation}
z(R_1)
:=
\ip{\ua}{x(R_1,T_2)}
=
\sum_j
\overline{u}_{aj}\,
s_j(R_1,T_2)
\label{eq:zslice}
\end{equation}

is also a finite exponential sum in $R_1$.

For real $R_1$, complex conjugation acts only on the coefficients. If

\begin{equation}
z(R_1)
=
\sum_k
c_k\exp(-a_kR_1),
\label{eq:z_general}
\end{equation}

then

\begin{equation}
\ab{z(R_1)}^2
=
z(R_1)\overline{z(R_1)}
=
\sum_{k,l}
c_k\overline{c_l}\,
\exp\bigl(-(a_k+a_l)R_1\bigr).
\label{eq:z_mod_square}
\end{equation}

Thus $P(R_1)=\ab{z(R_1)}^2$ is a finite exponential sum. Similarly,

\begin{equation}
Q(R_1)
=
\nm{x(R_1,T_2)}^2
=
\sum_j
\ab{s_j(R_1,T_2)}^2
\label{eq:Q_exp_sum}
\end{equation}

is a finite exponential sum. Differentiation and multiplication preserve
the class of finite exponential sums, so

\begin{equation}
H(R_1)
=
P'(R_1)Q(R_1)
-
P(R_1)Q'(R_1)
\label{eq:H_again}
\end{equation}

is also a finite exponential sum.

On the admissible slice in $\Theta^\circ$,

\begin{equation}
Q(R_1)
=
\nm{x(R_1,T_2)}^2
>
0.
\label{eq:Q_positive}
\end{equation}

Therefore the quotient rule gives

\begin{equation}
\frac{d}{dR_1}C^2(R_1,T_2)
=
\frac{
P'(R_1)Q(R_1)
-
P(R_1)Q'(R_1)
}{
Q(R_1)^2
}
=
\frac{H(R_1)}{Q(R_1)^2}.
\label{eq:C2_derivative_H}
\end{equation}

Hence the stationary points of $C^2(\cdot,T_2)$ are exactly the zeros
of $H$.

Because the slice is non-degenerate, $C^2(\cdot,T_2)$ is not constant,
and therefore $H$ is not identically zero. After collecting equal
exponents, it can be written as

\begin{equation}
H(R_1)
=
\sum_{r=1}^{M}
d_r\exp(-\lambda_rR_1),
\qquad
d_r\neq0,
\qquad
\lambda_1<\cdots<\lambda_M.
\label{eq:H_chebyshev}
\end{equation}

The functions

\begin{equation}
\exp(-\lambda_1R_1),
\ldots,
\exp(-\lambda_MR_1)
\label{eq:chebyshev_functions}
\end{equation}

form an extended Chebyshev system on any real interval
\cite{Polya1976}. Hence any nontrivial linear combination of M such exponentials has at most
$M-1$ real zeros, counted with multiplicity. It follows that $H$ has at
most $M-1$ zeros and therefore that $C^2(\cdot,T_2)$ has at most
$M-1$ stationary points on the admissible interval.

Now suppose that the TRs are commensurate. By
Corollary~\ref{cor:commensurate}, after fixing $T_2$ each signal
component is a polynomial in

\begin{equation}
q_1
=
\exp(-\Delta R_1).
\label{eq:q1_again}
\end{equation}

Because $R_1$ is real, $q_1$ is real and positive. Thus, if

\begin{equation}
s_j(q_1)
=
\sum_k
c_{jk}q_1^k,
\label{eq:sj_poly}
\end{equation}

then

\begin{equation}
\ab{s_j(q_1)}^2
=
\left(
\sum_k c_{jk}q_1^k
\right)
\left(
\sum_l \overline{c}_{jl}q_1^l
\right),
\label{eq:sj_mod_poly}
\end{equation}

which is an ordinary polynomial in $q_1$. The same argument applies to
$\ab{\ip{\ua}{x(q_1)}}^2$. Hence both $P(q_1)$ and $Q(q_1)$ are
polynomials in $q_1$.

Since

\begin{equation}
\frac{dq_1}{dR_1}
=
-\Delta q_1,
\label{eq:dq1dR1}
\end{equation}

the chain rule and quotient rule give

\begin{equation}
\frac{d}{dR_1}C^2
=
-\Delta q_1
\frac{
\displaystyle
\frac{dP}{dq_1}Q
-
P\frac{dQ}{dq_1}
}{
Q^2
}.
\label{eq:C2qderivative}
\end{equation}

Because $q_1>0$ and $Q>0$, the stationary points are exactly the zeros
of

\begin{equation}
\frac{dP}{dq_1}Q
-
P\frac{dQ}{dq_1}
\label{eq:q_stationary_poly}
\end{equation}

in the admissible $q_1$ interval.

If

\begin{equation}
\deg P=p,
\qquad
\deg Q=q,
\label{eq:degree_PQ_again}
\end{equation}

then

\begin{equation}
\deg\left(
\frac{dP}{dq_1}Q
\right)
\leq
p+q-1,
\qquad
\deg\left(
P\frac{dQ}{dq_1}
\right)
\leq
p+q-1.
\label{eq:degree_terms}
\end{equation}

Therefore

\begin{equation}
\deg\left(
\frac{dP}{dq_1}Q
-
P\frac{dQ}{dq_1}
\right)
\leq
p+q-1.
\label{eq:degree_bound_stationary}
\end{equation}

The polynomial cannot vanish identically on a non-degenerate slice;
otherwise \eqref{eq:C2qderivative} would imply that $C^2$ is constant.
It therefore has at most $p+q-1$ real roots. Finally,
$R_1\mapsto q_1=\exp(-\Delta R_1)$ is one-to-one, so roots in the
admissible $q_1$ interval correspond one-to-one to stationary points in
the admissible $R_1$ interval.

\end{proof}

For any fixed non-degenerate slice of a fixed finite MRF sequence,
Theorem~\ref{thm:finitestat} rules out arbitrarily many isolated
stationary points. It does not exclude a stationary curve in the full
two-dimensional domain. The global analysis therefore relies on
explicit local curvature and finite global certificates rather than on
a stronger two-dimensional finiteness claim.

%========================================================================
\section{The objective is the cosine of a distance}
\label{sec:cosine}

Because $C$ uses the modulus of a normalized inner product, it depends
on $x(\theta)$ only through the complex line spanned by that vector. The
natural state space for the matching problem is therefore complex
projective space $\Proj^{N-1}(\Cpx)$, equipped with the Fubini--Study
metric \cite{Kobayashi1969}, whose geodesic distance is
\begin{equation}\label{eq:fs}
d\bigl( [p], [q] \bigr)
 = \arccos \frac{\ab{\ip{p}{q}}}{\nm{p}\,\nm{q}}
 \in \Bigl[ 0, \frac{\pi}{2} \Bigr].
\end{equation}

\begin{proposition}[metric reformulation]\label{prop:metric}
For every $\theta\in\Theta^\circ$,
\begin{equation}
C(\theta)
=
\cos d\bigl([\ua],[u(\theta)]\bigr),
\qquad
L(\theta)
=
\sin^2 d\bigl([\ua],[u(\theta)]\bigr).
\end{equation}
Consequently, maximizing $C$ over $\Theta^\circ$ is equivalent to
minimizing the Fubini--Study distance from the fixed point $[\ua]$ to
the image of $\Theta^\circ$ under the map
$\theta\mapsto[u(\theta)]$. Because $\cos$ is strictly decreasing on
$[0,\pi/2]$, $C$ is a strictly decreasing function of this projective
distance.
\end{proposition}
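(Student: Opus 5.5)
The argument is a direct unwinding of the definitions, so I would proceed in three short steps.

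\emph{Step 1: both points are well defined.} On $\Theta^\circ$ we have $x(\theta)\neq 0$. Hence $u(\theta)=x(\theta)/\nm{x(\theta)}$ is a unit vector and $[u(\theta)]\in\Proj^{N-1}(\Cpx)$ is well defined. Since $\ta$ lies in the interior with $x(\ta)\neq0$, the same holds for $\ua$.

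\emph{Step 2: the identities for $C$ and $L$.} Substitute $p=\ua$ and $q=u(\theta)$ into \eqref{eq:fs}. Both vectors have unit norm, so the denominator equals one and
\[
d\bigl([\ua],[u(\theta)]\bigr)=\arccos\ab{\ip{\ua}{u(\theta)}}=\arccos C(\theta).
\]
The arccos is legitimate because $0\le C\le 1$ by the Cauchy--Schwarz remark. This places the distance in $[0,\pi/2]$, where $\cos\circ\arccos$ is the identity, so $C(\theta)=\cos d$. Then \eqref{eq:L} gives $L=1-C^2=1-\cos^2 d=\sin^2 d$.

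I would also note that the expression in \eqref{eq:fs} is independent of the representatives $p,q$. Rescaling $p\mapsto\lambda p$ and $q\mapsto\mu q$ with $\lambda,\mu\in\Cpx\setminus\{0\}$ multiplies both numerator and denominator by $\ab{\lambda}\ab{\mu}$. So $d$ is genuinely a function on projective space, and the identification does not depend on the choice of normalization or on the phase of $\rho$.

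\emph{Step 3: the equivalence of the two optimization problems.} The map $t\mapsto\cos t$ is strictly decreasing on $[0,\pi/2]$, so $C(\theta_1)>C(\theta_2)$ if and only if $d([\ua],[u(\theta_1)])<d([\ua],[u(\theta_2)])$. It follows that the argmax sets of $C$ over $\Theta^\circ$ coincide with the argmin sets of $\theta\mapsto d([\ua],[u(\theta)])$. The same holds for local maximizers and minimizers, since the comparison is pointwise. Maximizing $C$ is therefore equivalent to minimizing the Fubini--Study distance from $[\ua]$ to the image of $\Theta^\circ$.

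The only point needing care is the claim that \eqref{eq:fs} is the geodesic distance of the Fubini--Study metric. I would take this as the standard fact cited from \cite{Kobayashi1969}, with the metric normalized so that the diameter is $\pi/2$, rather than re-derive it. Everything else is a direct computation.
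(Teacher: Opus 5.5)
Your proposal is correct and follows the paper's approach. The paper gives no separate proof of Proposition~\ref{prop:metric} and treats it as immediate: substitute the unit vectors $\ua$ and $u(\theta)$ into \eqref{eq:fs}, use $L=1-C^2$ from \eqref{eq:L}, and invoke the strict monotonicity of $\cos$ on $[0,\pi/2]$, which are exactly your Steps~2--3.
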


Proposition~\ref{prop:metric} reframes the matching problem
geometrically. Correlation decreases monotonically with Fubini--Study
distance in projective signal space; the nontrivial question is whether
distance in parameter space maps monotonically to distance on the
projective signal manifold. This monotonicity can fail if the
parameter-to-signal map $\theta\mapsto[u(\theta)]$ loses local
sensitivity or folds back globally: nearby parameter directions may
become locally indistinguishable, while well-separated parameters may
generate nearly collinear fingerprints. The projective formulation also
supplies the triangle inequality and provides the metric framework for
the Lipschitz bounds used later in the noise analysis and finite global
certificates.

%========================================================================
\section{Local geometry of the peak}
\label{sec:local}

Proposition~\ref{prop:metric} shows that MRF matching is a geometric
nearest-point problem. The natural next question is: what determines the
local shape of the matching peak? To answer this, we introduce the
curvature matrix $G$, evaluated in a neighborhood of the true
parameters. This matrix governs the local curvature of the objective,
the local projective geometry of the signal manifold, and the
statistical precision of parameter estimation.

Since changes in overall signal amplitude or global complex phase do not
affect normalized correlation, only derivative components orthogonal to
the complex line spanned by the fingerprint carry information for
matching. Infinitesimally, moving away from the true fingerprint changes
the projective signal only through these orthogonal components. Their
real inner products define the local metric matrix $G$. Each diagonal element of $G$ measures the sensitivity of the normalized
fingerprint to one relaxation rate after removing changes along the
complex fingerprint line, which correspond to overall amplitude and
global phase. The off-diagonal elements measure coupling between the
$R_1$ and $R_2$ encoding directions. If the two projected derivative
directions become nearly linearly dependent over the real parameter
space, $G$ becomes ill-conditioned and the two parameters become
difficult to distinguish. Based on $G$, we have:

\begin{theorem}[second-order expansion and profiled Fisher information]
\label{thm:fisher}
Let
\begin{equation}
g_k(\theta)
:=
\Pi_u \frac{\partial u}{\partial R_k},
\qquad
\Pi_u := I-u u^{\hrm},
\qquad
k\in\{1,2\},
\end{equation}
where $I$ denotes the identity matrix. Define the $2\times2$ real
symmetric Gram matrix
\begin{equation}
\label{eq:G}
G_{k\ell}(\theta)
=
\Rea
\left\langle
\Pi_u\frac{\partial u}{\partial R_k},
\Pi_u\frac{\partial u}{\partial R_\ell}
\right\rangle,
\qquad
k,\ell\in\{1,2\}.
\end{equation}
Equivalently,
\begin{equation}
G(\theta)
=
\begin{pmatrix}
\nm{g_1}^2 &
\Rea\ip{g_1}{g_2}
\\[2pt]
\Rea\ip{g_2}{g_1} &
\nm{g_2}^2
\end{pmatrix}.
\end{equation}

Let
\begin{equation}
\Delta
=
R(\theta)-R(\ta).
\end{equation}
Then, as $\Delta\to0$,
\begin{equation}
\label{eq:expansion}
C(\theta)^2
=
1-\Delta^{\trn}G(\ta)\Delta
+
O(\nm{\Delta}^3).
\end{equation}
Consequently,
\begin{equation}
\label{eq:distance_squared}
d\bigl([\ua],[u(\theta)]\bigr)^2
=
\Delta^{\trn}G(\ta)\Delta
+
O(\nm{\Delta}^3).
\end{equation}
If $G(\ta)$ is positive definite, then equivalently
\begin{equation}
\label{eq:distance_expansion}
d\bigl([\ua],[u(\theta)]\bigr)
=
\nm{\Delta}_{G(\ta)}
+
O(\nm{\Delta}^2),
\end{equation}
where
\begin{equation}
\nm{\Delta}_{G(\ta)}
:=
\sqrt{\Delta^{\trn}G(\ta)\Delta}.
\end{equation}

Moreover, consider the observation model
\begin{equation}
y=\rho\,x(\theta)+n,
\qquad
n\sim\mathcal{CN}(0,\sigma^2 I),
\end{equation}
with unknown complex nuisance parameter $\rho$. After eliminating
$\rho$, the profiled Fisher information matrix for $(R_1,R_2)$ is
\begin{equation}
\label{eq:fisher_profiled}
\mathcal{I}_{\mathrm{prof}}(\theta)
=
\frac{2S(\theta)^2}{\sigma^2}\,G(\theta),
\qquad
S(\theta):=\ab{\rho}\,\nm{x(\theta)}.
\end{equation}
Hence, whenever $G(\theta)$ is positive definite, the corresponding
Cram\'{e}r--Rao bound is
\begin{equation}
\label{eq:crb}
\operatorname{Cov}(\widehat{R})
\succeq
\frac{\sigma^2}{2S(\theta)^2}\,
G(\theta)^{-1}.
\end{equation}
\end{theorem}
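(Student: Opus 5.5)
The proof has two independent parts: the geometric expansion of $C^2$ about $\ta$, and the profiled Fisher information. Both rest on the fact that $\Pi_u$ strips away the amplitude and phase directions.

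\textbf{Expansion of $C^2$.} By Theorem~\ref{thm:analytic}, $u(\theta)$ is real-analytic near $\ta$ in $R$, since $\nm{x}>0$ there. I would Taylor-expand
\[
u(\theta)=\ua+\textstyle\sum_k \Delta_k \partial_k u+\tfrac12\sum_{k\ell}\Delta_k\Delta_\ell \partial_{k\ell}u+O(\nm{\Delta}^3).
\]
Rather than expand $\ab{\ip{\ua}{u}}^2$ directly, I would use $L=\nm{\Pi_{u(\theta)}\ua}^2$. By the symmetry $1-\ab{\ip{a}{b}}^2$ for unit vectors, this also equals $\nm{\Pi_{\ua}u(\theta)}^2$.

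Now $\Pi_{\ua}\ua=0$. Hence
\[
\Pi_{\ua}u(\theta)=\textstyle\sum_k\Delta_k g_k(\ta)+O(\nm{\Delta}^2),
\]
and squaring gives
\[
L=\Big\lVert\textstyle\sum_k\Delta_k g_k\Big\rVert^2+O(\nm{\Delta}^3).
\]
The cross term between the first-order part and the $O(\nm{\Delta}^2)$ remainder is $O(\nm{\Delta}^3)$. The leading norm is $\Rea\sum_{k\ell}\Delta_k\Delta_\ell\ip{g_k}{g_\ell}=\Delta^\trn G\Delta$, because $\Delta$ is real and the imaginary parts cancel by Hermitian symmetry. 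This proves \eqref{eq:expansion}.

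\textbf{Distance formulas.} By Proposition~\ref{prop:metric}, $\sin^2 d=L$. Since $d\to0$, we have $d^2=\sin^2 d+O(d^4)$, and $d^4=O(L^2)=O(\nm{\Delta}^4)$. This gives \eqref{eq:distance_squared}. If $G\succ0$, then $\Delta^\trn G\Delta\ge\lmin\nm{\Delta}^2$, so
\[
\sqrt{a+O(\nm{\Delta}^3)}=\sqrt a+O(\nm{\Delta}^3)/\sqrt a=\sqrt a+O(\nm{\Delta}^2),
\]
which yields \eqref{eq:distance_expansion}.

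\textbf{Profiled Fisher information.} I would parametrize by $(R_1,R_2,\Rea\rho,\Ima\rho)$, with mean $\mu=\rho x$. For $\mathcal{CN}(0,\sigma^2I)$ noise the real Fisher information is
\[
\mathcal I_{ab}=\tfrac{2}{\sigma^2}\Rea\ip{\partial_a\mu}{\partial_b\mu}.
\]
The nuisance derivatives $x$ and $ix$ span, over $\mathbb R$, the complex line $\Cpx x$. The Schur complement of the nuisance block then equals $\tfrac{2}{\sigma^2}\Rea\ip{\Pi_x\partial_k\mu}{\Pi_x\partial_\ell\mu}$. This holds because real-orthogonal projection onto $\mathrm{span}_{\mathbb R}\{x,ix\}$ coincides with complex projection onto $\Cpx x$, under the real inner product $\Rea\ip{\cdot}{\cdot}$.

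Next write $x=\nm{x}u$. Then $\partial_k x=(\partial_k\nm{x})u+\nm{x}\partial_k u$, and so $\Pi_u\partial_k\mu=\rho\nm{x}g_k$. This gives $\mathcal I_{\mathrm{prof}}=\tfrac{2\ab{\rho}^2\nm{x}^2}{\sigma^2}G$. The Cram\'er--Rao bound \eqref{eq:crb} is the standard inverse of the profiled (Schur-complement) information whenever it is invertible.

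The main point needing care is the Schur-complement identification. One must verify that the real-linear projection onto $\{x,ix\}$ equals $xx^\hrm/\nm{x}^2$; this follows from $\Rea\ip{x}{ix}=0$ and $\nm{ix}=\nm{x}$.
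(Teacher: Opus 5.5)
The paper gives no proof of this theorem. It is stated, then used for $\nabla_R^2C^2(\ta)=-2G(\ta)$ and the local-strictness discussion, and the preprint says further proofs are still to come. So there is no author's route to compare against. Judged on its own, your argument is correct and complete.

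Its two load-bearing steps are sound:
\begin{itemize}
\item Rewriting $L=\nm{\Pi_{\ua}u(\theta)}^2$ via the symmetry of $1-\ab{\ip{a}{b}}^2$ applies a \emph{fixed} projector to $u(\theta)-\ua$. Hence $\Delta^{\trn}G(\ta)\Delta$ appears immediately with an $O(\nm{\Delta}^3)$ remainder, and you never have to differentiate $\Pi_{u(\theta)}$.
\item Identifying the real-orthogonal projection onto $\mathrm{span}_{\mathbb{R}}\{x,ix\}$ with $xx^{\hrm}/\nm{x}^2$ (using $\Rea\ip{x}{ix}=0$ and $\nm{ix}=\nm{x}$) turns the Schur complement into the real Gram matrix of $\Pi_u\partial_k\mu=\rho\nm{x}g_k$. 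The amplitude term $(\partial_k\nm{x})u$ is annihilated.
\end{itemize}

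A few minor points to tighten:
\begin{itemize}
\item Theorem~\ref{thm:analytic} concerns $C^2$ and $L$, not $u$. Justify the Taylor expansion of $u=x/\nm{x}$ from Proposition~\ref{prop:expoly} together with $x(\ta)\neq0$; twice continuous differentiability is all you need.
\item In the square-root step, note that $a+\epsilon=d^2\ge0$, so $\ab{\sqrt{a+\epsilon}-\sqrt{a}}=\ab{\epsilon}/(\sqrt{a+\epsilon}+\sqrt{a})\le\ab{\epsilon}/\sqrt{a}$. This is where positive definiteness of $G(\ta)$ is genuinely needed; without it you only get $O(\nm{\Delta}^{3/2})$.
\item For the Cram\'{e}r--Rao bound, state that $\widehat{R}$ is unbiased and that $\rho\neq0$ (so $S>0$). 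Observe that $\mathcal{I}_{\rho\rho}=\frac{2\nm{x}^2}{\sigma^2}I$ is invertible. Then the full $4\times4$ information is positive definite exactly when $G\succ0$, and by block inversion the $RR$ block of its inverse is $\mathcal{I}_{\mathrm{prof}}^{-1}$.
\end{itemize}

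As a by-product, your expansion also gives $\nabla_R^2C^2(\ta)=-2G(\ta)$ by uniqueness of second-order Taylor coefficients. The paper asserts this right after the theorem without proof.
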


Theorem~\ref{thm:fisher} describes the local peak geometry through three
equivalent quantities. At the true parameter,
\begin{equation}
\label{eq:hessian_G}
\nabla_R^2 C^2(\ta)
=
-2G(\ta).
\end{equation}
Thus, when $G(\ta)$ is positive definite, or equivalently the smallest eigenvalue
$\lmin(G(\ta))>0$, we have
\begin{equation}
\nabla_R^2 C^2(\ta)
\preceq
-2\lmin(G(\ta))\,I,
\end{equation}
where $I$ denotes the $2\times2$ identity matrix. Hence the true
parameter is a strict non-degenerate local maximizer, $R_1$ and $R_2$
are locally identifiable from the normalized fingerprint to first
order, and the local Cram\'{e}r--Rao bound is finite. This equivalence is
strictly local: positive profiled Fisher information at $\ta$ cannot
exclude a distant alias or secondary peak. Conversely, failure of
positive definiteness reflects a local encoding degeneracy rather than
a peculiarity of the correlation objective.

The eigenvalues and eigenvectors of $G$ describe the principal local
directions of the peak. Large eigenvalues correspond to narrow,
strongly encoded directions, whereas small eigenvalues correspond to
broad, weakly encoded directions. The eigenvectors define these
principal directions, while the off-diagonal entry reflects coupling
between the two relaxation-rate sensitivities. We use $G$ because it
simultaneously represents the local projective metric, the curvature of
the matching peak, and, up to the signal-to-noise scale factor in
\eqref{eq:fisher_profiled}, the profiled Fisher information.

Theorem~\ref{thm:fisher} establishes a strict non-degenerate local
maximum when $G(\ta)$ is positive definite, but it does not determine
how far from the optimum the Hessian remains negative definite. To
address this question, we bound the variation of the Hessian through
the third derivative of $C^2$.

\begin{corollary}[explicit radius of strict concavity and radial
monotonicity]
\label{cor:radius}
Suppose
\begin{equation}
\mu
:=
\lmin(G(\ta))
>
0.
\end{equation}
Let $K$ be a compact convex neighborhood of $R(\ta)$ contained in
$\Theta^\circ$, and define
\begin{equation}
K_3
:=
\sup_{R\in K}
\nm{D^3 C^2(R)}.
\end{equation}
With the convention $2\mu/K_3=+\infty$ when $K_3=0$, put
\begin{equation}
\label{eq:rhostar}
\rho^{*}
:=
\min
\left\{
\dis\bigl(R(\ta),\partial K\bigr),
\frac{2\mu}{K_3}
\right\}.
\end{equation}
Then $C^2$ is strictly concave on
\begin{equation}
B
=
\left\{
R:
\nm{R-R(\ta)}<\rho^{*}
\right\}.
\end{equation}
Consequently, $\ta$ is the unique stationary point and the unique
maximizer of $C$ in $B$, and there is no other local maximum in $B$.
Moreover, for every unit vector $e$,
\begin{equation}
t
\longmapsto
C\bigl(R(\ta)+te\bigr)
\end{equation}
is strictly decreasing for $0<t<\rho^{*}$ whenever the corresponding
ray remains in $B$.

Furthermore, $K_3$ is finite, and an explicit computable upper bound can
be obtained from the coefficients and exponents of the finite
exponential sums in Proposition~\ref{prop:expoly}, together with a
positive lower bound on $\nm{x}^2$ over $K$.
\end{corollary}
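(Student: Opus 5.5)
The plan is a Taylor argument with integral remainder applied to the Hessian of $f := C^2$ in the rate coordinates $R$. By Theorem~\ref{thm:analytic}, $f$ is real-analytic on a neighborhood of $K\subset\Theta^\circ$, so $D^3 f$ is continuous on the compact set $K$. Hence $K_3<\infty$, and the mean-value inequality applies along segments in the convex set $K$. Equation~\eqref{eq:hessian_G} gives $\nabla^2 f(R(\ta)) = -2G(\ta) \preceq -2\mu I$.

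\textbf{Strict concavity.} For $R\in B$ the segment from $R(\ta)$ to $R$ lies in $K$, because $\nm{R-R(\ta)}<\dis(R(\ta),\partial K)$. Therefore
\[
\nm{\nabla^2 f(R)-\nabla^2 f(R(\ta))}\le K_3\nm{R-R(\ta)} < K_3\rho^* \le 2\mu .
\]
By Weyl's inequality, $\lmax(\nabla^2 f(R)) < -2\mu + 2\mu = 0$. So $\nabla^2 f$ is negative definite on the convex open ball $B$, and $f$ is strictly concave there.

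\textbf{Uniqueness of stationary point and maximizer.} A strictly concave $C^1$ function on a convex set has at most one stationary point, and any such point is the unique global maximizer on that set. Since $f(\ta)=1$ is maximal, $\nabla f(R(\ta))=0$. So $\ta$ is the only stationary point of $f$ in $B$, and the only maximizer and local maximum of $f$ there.

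To transfer these conclusions to $C$, I would argue as follows. In $B$ we have $f>0$: if $f$ vanished somewhere in $B$, strict concavity along the segment to $R(\ta)$ would contradict nothing directly, so I instead use $f(R)> f(R(\ta)) + \dots$; more simply, concavity gives $f\ge \min$ of the endpoint values along segments. Cleaner still is to note that $C=\sqrt f$ with $\sqrt{\cdot}$ strictly increasing on $[0,\infty)$. Maximizers and local maxima of $C$ then coincide with those of $f$. At points with $f>0$, $\nabla C=\nabla f/(2\sqrt f)$, so stationary points also coincide. I would add a remark that a zero of $f$ inside $B$ is excluded along each ray by the monotonicity established next, since $f$ is strictly decreasing from the value 1 and stays positive until it could reach 0.

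\textbf{Radial monotonicity.} Let $\phi(t)=f(R(\ta)+te)$. Then $\phi'(0)=0$ and $\phi''(t)=e^\trn\nabla^2 f\,e<0$ for $0\le t<\rho^*$. Hence $\phi'(t)=\int_0^t\phi''<0$ for $t>0$, so $\phi$ is strictly decreasing and $C=\sqrt\phi$ is strictly decreasing as well.

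\textbf{Explicit bound on $K_3$.} This is the main obstacle, since it is bookkeeping rather than conceptual. I would write $f=P/Q$, where $P=\ab{\ip{\ua}{x}}^2$ and $Q=\nm{x}^2$ are finite real exponential sums. Their coefficients $\sum c\bar c'$ and exponents $a+a'$ come from Proposition~\ref{prop:expoly}, exactly as in the proof of Theorem~\ref{thm:finitestat}.

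Each derivative $\partial^\alpha$ of such a sum is bounded on $K$ by
\[
\sum |c|\,|\text{exponent}|^{|\alpha|}\, e^{-(\text{exponent})\cdot R_{\min}},
\]
where $R_{\min}$ is the componentwise minimum of the rates over $K$. For the third derivative of the quotient, I expand $D^3(P\cdot Q^{-1})$ by Leibniz and Fa\`a di Bruno. Every term is a product of derivatives of $P$ and $Q$ divided by $Q^{k}$ with $k\le4$. Bounding the denominators by $Q_{\min}:=\min_K\nm{x}^2>0$ then gives an explicit bound on $K_3$. The care needed is in tracking the combinatorial constants, for example via the formula for $D^3(1/Q)$, and in converting componentwise bounds to the operator norm of the symmetric trilinear form, for which a factor $2^{3/2}$ from the Frobenius norm suffices.
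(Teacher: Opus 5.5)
Your argument is correct and follows the route the paper indicates. The paper writes out no proof of this corollary; it only says the Hessian variation is controlled through $D^3 C^2$. Your steps match that: $\nabla^2 C^2(R(\ta))=-2G(\ta)\preceq-2\mu I$, the mean-value bound $\nm{\nabla^2 C^2(R)-\nabla^2 C^2(R(\ta))}\le K_3\nm{R-R(\ta)}<2\mu$ on $B$, Weyl's inequality, and a Leibniz/Fa\`a di Bruno bound on $K_3$ via $\min_K\nm{x}^2$. Your paragraph transferring the conclusions to $C$ is muddled, but positivity of $C^2$ on $B$ follows at once: a zero in the open ball would be an interior minimum of the nonnegative analytic $C^2$, hence a second stationary point, contradicting the uniqueness you already proved.
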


Corollary~\ref{cor:radius} establishes the local part of the single-peak
property: within an explicitly computable neighborhood, the target is
the unique stationary point and the objective decreases strictly along
every ray. Global concavity is neither required nor expected because the
bounded correlation objective can become relatively flat at low
similarity far from the peak. Likewise, the approximate $T_1/T_2$
orthogonality reported as Property~(III) in \cite{Wang2018} is not
required. Off-diagonal elements of $G$ rotate the principal axes of the
local peak but do not invalidate strict local concavity as long as
$G(\ta)$ remains positive definite. In practice this coupling is not
negligible, which helps explain why interleaved rather than single-pass
coordinate updates are preferable.

%========================================================================
\section{Global uniqueness and radial monotonicity}
\label{sec:global}

Local curvature alone cannot exclude a competing peak far from the
truth, and no such guarantee can hold for every possible acquisition
schedule. We therefore distinguish two global properties. The first is
uniqueness of the global maximizer. The second, stronger property is the
behavior observed in \cite{Wang2018,Wang2020}: strict decrease along
every admissible ray from the true parameters, which also excludes
secondary stationary points. The first can be certified from objective
values; the second requires control of the radial derivative.

\begin{theorem}[analytic structure of the ambiguity set]
\label{thm:ambiguity}
Define the ambiguity set
\begin{equation}
A(\ta)
=
\left\{
\theta\in\Theta^\circ:
C(\theta)=1
\right\}.
\end{equation}
Then $A(\ta)$ is the common zero set, relative to $\Theta^\circ$, of the
complex-valued real-analytic functions
\begin{equation}
\label{eq:minors}
m_{ij}(\theta)
=
s_i(\theta)\,s_j(\ta)
-
s_j(\theta)\,s_i(\ta),
\qquad
1\le i<j\le N.
\end{equation}
Equivalently,
\begin{equation}
A(\ta)
=
\left\{
\theta\in\Theta^\circ:
\Phi(\theta)=0
\right\},
\qquad
\Phi(\theta)
:=
\sum_{i<j}\ab{m_{ij}(\theta)}^2,
\end{equation}
where $\Phi$ is a real-valued real-analytic function. If $G(\ta)$ is
positive definite, then $A(\ta)$ has empty interior in $\Theta^\circ$
and Lebesgue measure zero.

If, in addition, the TRs are commensurate and each signal component has
polynomial degree at most $M_N$ in $(q_1,q_2)$, then each $m_{ij}$ is a
polynomial in $(q_1,q_2)$ of degree at most $M_N$. Hence $A(\ta)$ is a
real algebraic set relative to the admissible parameter domain. If any
two nonzero minors are coprime, their common isolated zeros number at
most $M_N^2$ over the complexification, counted with multiplicity, and
therefore the number of real isolated ambiguity points is also at most
$M_N^2$.
\end{theorem}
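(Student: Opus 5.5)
The plan is to prove the four assertions in order: the characterization of $A(\ta)$ by the minors, the positive-definite case, the algebraic case, and the Bézout count.

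\emph{Characterization.} By the Remark following \eqref{eq:L}, which rests on Cauchy--Schwarz, $C(\theta)=1$ if and only if $u(\theta)$ and $\ua$ are complex-collinear. Since $x(\theta)\neq0$ on $\Theta^\circ$ and $x(\ta)\neq0$, this is equivalent to $x(\theta)$ and $x(\ta)$ being linearly dependent over $\Cpx$. That in turn holds exactly when the $N\times 2$ matrix $[x(\theta)\;x(\ta)]$ has rank at most one, i.e.\ when every $2\times2$ minor $m_{ij}$ in \eqref{eq:minors} vanishes. Each $m_{ij}$ is a complex combination of the entire functions $s_i$ from Proposition~\ref{prop:expoly}, with $s_j(\ta)$ a constant, so its real and imaginary parts are real-analytic. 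Hence $\Phi=\sum_{i<j}\ab{m_{ij}}^2$ is a finite sum of squares of real-analytic functions, and so is real-analytic, non-negative, and vanishes precisely on the common zero set of the $m_{ij}$.

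\emph{Empty interior and measure zero.} Assume $G(\ta)\succ0$. By \eqref{eq:expansion}, $\ta$ is a strict local maximizer, so $C<1$ on a punctured neighborhood of $\ta$. Therefore $\Phi\not\equiv0$ on the connected component $U$ of $\Theta^\circ$ containing $\ta$. On $U$ the identity theorem, used as in Corollary~\ref{cor:noplateau}, shows that the zero set of $\Phi$ has empty interior. The measure-zero claim then follows from the standard fact that the zero set of a nontrivial real-analytic function on a connected open set has Lebesgue measure zero; this is proved by induction on dimension via Fubini, or by citing \cite{Krantz2002}.

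\emph{Main obstacle.} Other components of $\Theta^\circ$, where $\Phi$ could vanish identically, are not controlled by local information at $\ta$. I would handle them in one of two ways. Since $\Theta^\circ$ is the complement in $\Theta$ of the zero set of the real-analytic function $\nm{x}^2$, one can argue it is connected under mild non-degeneracy. Alternatively, one can state the conclusion on the component containing $\ta$. I expect this connectedness issue to be the delicate point, and would flag it explicitly rather than hide it.

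\emph{Algebraic case.} By Corollary~\ref{cor:commensurate}, each $s_i$ is a polynomial in $(q_1,q_2)$ of degree at most $M_N$, and $s_j(\ta)$ is a constant, so $m_{ij}\in\Cpx[q_1,q_2]$ has degree at most $M_N$. Splitting each minor into real and imaginary parts (real polynomials, since $q_1,q_2$ are real on the domain) shows that $A(\ta)$ is the real algebraic set $\{\Phi=0\}$ intersected with the image of the admissible rectangle under the diffeomorphism $R\mapsto(q_1,q_2)$. For the count, take two nonzero minors $m_{ij},m_{kl}$ with no common factor. Bézout's theorem in $\Proj^2(\Cpx)$, after homogenizing, gives at most $\deg m_{ij}\cdot\deg m_{kl}\le M_N^2$ common zeros in $\Cpx^2$, counted with multiplicity; coprimality ensures the intersection is finite. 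Since $A(\ta)$ is contained in the real points of this common zero set, it has at most $M_N^2$ isolated points. Indeed, in the coprime case all of its points are isolated.
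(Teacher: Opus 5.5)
The paper states Theorem~\ref{thm:ambiguity} without a proof, so I am comparing your proposal with the statement itself. Your characterization via the $2\times2$ minors is correct, and so are the reduction to real polynomials and the B\'ezout count, including the remark that coprimality makes the whole ambiguity set finite. The gap is in the empty-interior and measure-zero step. There you apply the identity theorem only on the connected component of $\Theta^\circ$ containing $\ta$. You then leave the other components open, offering either an unproved connectedness claim or a conclusion restricted to that component. Neither proves the statement as written.

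The missing observation is that, unlike $C^2$, the function $\Phi$ has no denominator. Each $m_{ij}$ is a fixed linear combination of the $s_i$, hence entire in $(R_1,R_2)$ by Proposition~\ref{prop:expoly}. So $\Phi$ is real-analytic on the whole open quadrant $(0,\infty)^2$ containing $\Theta$ (in either $\theta$ or rate coordinates), including across the set where $x$ vanishes. This is clearest from the Lagrange identity
$\Phi(\theta)=\nm{x(\theta)}^2\nm{x(\ta)}^2-\ab{\ip{x(\ta)}{x(\theta)}}^2=\nm{x(\theta)}^2\nm{x(\ta)}^2L(\theta)$,
which also gives $\{\Phi=0\}=A(\ta)\cup\{x=0\}$.

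Now apply the identity theorem on this connected quadrant. Since $\Phi>0$ on a punctured neighborhood of $\ta$, $\Phi\not\equiv0$. Hence its zero set $Z$ has empty interior and Lebesgue measure zero in the quadrant. Because $A(\ta)=Z\cap\Theta^\circ\subset Z$, both conclusions hold on every component of $\Theta^\circ$ at once, and the topology of $\Theta^\circ$ never enters. With this route, $G(\ta)\succ0$ is needed only to produce a single point where $\Phi\neq0$.

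The same identity corrects a small slip in your algebraic paragraph. $A(\ta)$ is $\{\Phi=0\}$ intersected with the image of $\Theta^\circ$, not of the whole rectangle. Equivalently, the real algebraic set $\{\nm{x}^2=0\}$ must be removed.
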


Theorem~\ref{thm:ambiguity} addresses exact aliases, but uniqueness of
the global maximizer does not by itself exclude a lower secondary peak.
The next theorem therefore provides two finite certificates: a
value-based certificate for global uniqueness and a radial-derivative
certificate for the stronger global unimodality statement.

\begin{theorem}[finite certificates for global uniqueness and global
radial monotonicity]
\label{thm:certificates}
Let
\begin{equation}
F(R)
:=
C(R)^2,
\qquad
\Ra
:=
R(\ta),
\end{equation}
and suppose that the compact rate-coordinate parameter domain
$\mathcal{D}$ is convex and contained in $\Theta^\circ$. Let
\begin{equation}
B
=
\left\{
R\in\mathcal{D}:
\nm{R-\Ra}<\rho^{*}
\right\}
\end{equation}
be the strictly concave ball from
Corollary~\ref{cor:radius}, and let
\begin{equation}
A
=
\left\{
R\in\mathcal{D}:
\nm{R-\Ra}\geq\rho^{*}
\right\}
\end{equation}
denote the remaining compact parameter region outside $B$.

Define
\begin{equation}
L_C
:=
\sup_{R\in\mathcal{D}}
\sqrt{\lmax(G(R))}.
\end{equation}
Then, for any $R,R'\in\mathcal{D}$,
\begin{equation}
\label{eq:lipschitz}
\ab{C(R)-C(R')}
\leq
d\bigl([u(R)],[u(R')]\bigr)
\leq
L_C\nm{R-R'}.
\end{equation}

For $R\neq\Ra$, define the outward radial unit vector and radial
derivative by
\begin{equation}
e(R)
:=
\frac{R-\Ra}{\nm{R-\Ra}},
\qquad
q(R)
:=
e(R)^{\trn}\nabla F(R).
\end{equation}
On $A$, $q$ is continuously differentiable. Let $H$ be any certified
upper bound satisfying
\begin{equation}
\nm{\nabla q(R)}
\leq H,
\qquad
R\in A.
\end{equation}
One convenient choice is any $H$ no smaller than
\begin{equation}
\label{eq:Hbound}
\sup_{R\in A}
\left[
\nm{\nabla^2F(R)}
+
\frac{\nm{\nabla F(R)}}{\nm{R-\Ra}}
\right].
\end{equation}

\emph{Value certificate.}
Let $\mathcal{G}_A\subset A$ be a finite grid of fill radius $h$ in $A$,
meaning that for every $R\in A$ there exists
$R_g\in\mathcal{G}_A$ such that
\begin{equation}
\nm{R-R_g}\leq h.
\end{equation}
Let
\begin{equation}
V
:=
\max_{R_g\in\mathcal{G}_A} C(R_g).
\end{equation}
If
\begin{equation}
\label{eq:valuecert}
V+L_C h<1,
\end{equation}
then no global maximizer lies outside $B$. Consequently, $\ta$ is the
unique global maximizer of $C$ over $\mathcal{D}$.

\emph{Radial certificate.}
Using a finite grid $\mathcal{G}_A\subset A$ of fill radius $h$, suppose
\begin{equation}
\max_{R_g\in\mathcal{G}_A}q(R_g)
=
-m
<
0.
\end{equation}
If
\begin{equation}
\label{eq:radialcert}
Hh<m,
\end{equation}
then
\begin{equation}
q(R)<0
\end{equation}
everywhere in $A$. Combined with
Corollary~\ref{cor:radius}, $F$ and hence $C$ decrease strictly along
every admissible ray from $\Ra$ throughout $\mathcal{D}$.
Consequently, $\ta$ is the only stationary point and the only local or
global maximizer in $\mathcal{D}$.
\end{theorem}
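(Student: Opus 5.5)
The proof splits into three parts: the Lipschitz chain \eqref{eq:lipschitz}, the value certificate, and the radial certificate.

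\emph{Lipschitz chain.} For the first inequality in \eqref{eq:lipschitz}, I use Proposition~\ref{prop:metric}. It gives $C(R)=\cos d([\ua],[u(R)])$. Since $\cos$ is $1$-Lipschitz, and the Fubini--Study triangle inequality gives
\[
\ab{d([\ua],[u(R)])-d([\ua],[u(R')])}\le d([u(R)],[u(R')]),
\]
we obtain $\ab{C(R)-C(R')}\le d([u(R)],[u(R')])$.

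For the second inequality, I take the segment $\gamma(t)=R'+t(R-R')$, $t\in[0,1]$, which lies in $\mathcal{D}$ by convexity. The projective curve $[u(\gamma(t))]$ has Fubini--Study speed equal to the norm of the horizontal component, namely $\nm{\Pi_u\,\partial_t u}$. This is exactly the local-metric statement behind Theorem~\ref{thm:fisher}: the pullback metric is $G$. Hence the speed equals $\sqrt{(R-R')^{\trn}G(\gamma)(R-R')}\le\sqrt{\lmax(G)}\,\nm{R-R'}$. Geodesic distance is at most curve length, which gives $d\le L_C\nm{R-R'}$. Finiteness of $L_C$ follows from compactness and the continuity of $G$ on $\Theta^\circ$, using Proposition~\ref{prop:expoly}.

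\emph{Value certificate.} Take any $R\in A$ and pick $R_g$ with $\nm{R-R_g}\le h$. Then
\[
C(R)\le C(R_g)+L_Ch\le V+L_Ch<1,
\]
so no point of $A$ attains the value $C=1=C(\ta)$. Inside $B$, Corollary~\ref{cor:radius} makes $\ta$ the unique maximizer. Therefore $\ta$ is the unique global maximizer over $\mathcal{D}$. This uses only that $C(\ta)=1$ is the global upper bound from Cauchy--Schwarz.

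\emph{Radial certificate.} First, $q$ is $C^1$ on $A$: $F$ is real-analytic on $\Theta^\circ$ by Theorem~\ref{thm:analytic}, and $e(R)$ is smooth away from $\Ra$. On $A$ we have $\nm{R-\Ra}\ge\rho^*>0$.

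Next I justify the bound \eqref{eq:Hbound}. Differentiating $q=e^{\trn}\nabla F$ gives $\nabla q=\nabla^2F\,e+(De)^{\trn}\nabla F$. Here $De=(I-ee^{\trn})/\nm{R-\Ra}$, whose operator norm is $1/\nm{R-\Ra}$. So
\[
\nm{\nabla q}\le\nm{\nabla^2F}+\nm{\nabla F}/\nm{R-\Ra}.
\]

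The step I expect to be the main obstacle is extending the sign of $q$ from grid points to all of $A$ via the mean value theorem. $A$ is not convex: it is $\mathcal{D}$ minus a ball. So the segment from $R$ to its nearest grid point may cross $B$, where $q$ is undefined near $\Ra$ and the bound $H$ is not certified. I would handle this by requiring the bound $\nm{\nabla q}\le H$ on the $h$-neighborhood of segments. Alternatively, I would note that the segment between $R\in A$ and $R_g\in A$ of length at most $h$ stays at distance at least $\rho^*-h$ from $\Ra$; when $h$ is small, the supremum in \eqref{eq:Hbound} can then be taken over the slightly enlarged shell $\{\nm{R-\Ra}\ge\rho^*-h\}\cap\mathcal{D}$. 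I would state this mild adjustment, or else interpret ``$H$ certified on $A$'' as covering those segments. With it,
\[
q(R)\le q(R_g)+Hh\le -m+Hh<0.
\]

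To conclude: on $B$, strict decrease along rays comes from Corollary~\ref{cor:radius}, and on $A$, $q<0$ means $\frac{d}{dt}F(\Ra+te)<0$. Since $\mathcal{D}$ is convex, every ray from $\Ra$ stays in $\mathcal{D}$ up to its exit point. So $F$, and hence $C=\sqrt F$ (monotone where $F>0$), strictly decreases along each ray. Any stationary point $R\ne\Ra$ would have $q(R)=0$, which is impossible. Hence $\ta$ is the only stationary point and the only local or global maximizer.
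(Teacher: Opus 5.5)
The paper gives no proof of Theorem~\ref{thm:certificates}; the statement is followed only by discussion, so there is no argument to compare yours against. Your argument is correct, and it is the one the surrounding text points to: the Fubini--Study triangle inequality combined with the $1$-Lipschitz cosine, the length of the straight segment measured in the pulled-back metric $G$ of Theorem~\ref{thm:fisher}, and a grid-plus-Lipschitz transfer for both certificates.

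The obstacle you flag is a genuine defect of the statement, not of your reasoning. The hypothesis assumes $\nm{\nabla q}\le H$ only on the non-convex set $A$. The segment from $R\in A$ to its grid point may enter $B$, so the hypotheses as written do not license the mean-value step. Your enlargement of the shell is a valid repair. A distance of $\rho^{*}-h/2$ from $\Ra$ already suffices when $h<2\rho^{*}$.

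A repair that needs no condition on $h$ is also available:
\begin{itemize}
\item Because $\nabla F(\Ra)=0$ and $\mathcal{D}$ is convex, $\nm{\nabla F(R)}\le\nm{R-\Ra}\sup_{\mathcal{D}}\nm{\nabla^2F}$. Hence the supremum in \eqref{eq:Hbound}, taken over all of $\mathcal{D}\setminus\{\Ra\}$, is at most $2\sup_{\mathcal{D}}\nm{\nabla^2F}$.
\item Since $\ab{q}\le\nm{\nabla F}$, $q$ extends continuously to $\Ra$ by setting $q(\Ra)=0$.
\item Therefore $q(R)\le q(R_g)+Hh$ holds along every segment in $\mathcal{D}$, including one that passes through $\Ra$.
\end{itemize}
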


The two certificates address different practical questions. The value
certificate is sufficient to establish that the target is the unique
global optimum once the local basin is certified; a coarse MRF-ZOOM
sweep supplies the required samples, while the Lipschitz term converts
finite sampling into a continuous-domain guarantee. The
radial-derivative certificate is stronger because it formalizes the full
single-peak, monotone-decay behavior observed in
\cite{Wang2018,Wang2020} and excludes lower secondary extrema. Both
certificates are finite and sequence specific. Neither implies that
every conceivable MRF schedule is globally unimodal.

%========================================================================
\section{Discussion}
\label{sec:discussion}
The uniqueness and convergence of MRF dictionary searching have not yet
been fully established, and this work is a first step in that direction.
For simplicity we treat only $T_1$ and $T_2$ and an uncompressed
dictionary; compressed dictionary searching will be covered in a separate
manuscript. This preprint covers most of the derivations but more proofs and analysis results are needed and will be provided to complete the work.

\subsection{Relation to the three empirical properties}

The results above give a more precise account of the three empirical
properties reported in \cite{Wang2018}. Property (II) decomposes into
unconditional smoothness, a locally strict peak under positive Fisher
information, and sequence-specific global certificates. Property (I)
concerns off-resonance and lies outside the scope of this condensed
treatment. Property
(III), approximate $T_1/T_2$ orthogonality, is not required for
unimodality; the off-diagonal curvature is in general not negligible,
which primarily affects the efficiency of coordinate-wise search rather
than the existence of a strict local peak.

\subsection{Implications for MRF-ZOOM and for sequence design}

Several practical consequences follow (TBD).

\subsection{Limitations}

Several limitations define the scope of the present theory. The analysis
assumes a single tissue compartment parameterized by $(T_1, T_2)$;
off-resonance is not treated here.
Partial-volume mixtures, magnetization transfer, diffusion, flow,
exchange, and other additional physics can enlarge or alter the signal
manifold and may introduce genuine non-identifiability. Sequence
imperfections such as $B_1$ inhomogeneity can be incorporated as fixed
nuisance effects in the forward model, but estimating them as additional
parameters enlarges the information matrix and may degrade conditioning.
The finite global certificates are also sequence- and target-specific;
establishing a uniform guarantee over an entire parameter family
requires uniform bounds on the local curvature, derivative Lipschitz
constants, and global separation margin, and must be evaluated afresh for
any new sequence to which the framework is applied.

\section*{Acknowledgement}
This work was initiated at the time when MRF-ZOOM was eventually published after being rejected  (even as an abstract to ISMRM) more than 8 times within 4-5 years. I would thank one previous reviewer for the comment of lacking an analytic proof though the MRF itself has yet been analytically proven to converge. As my research interest has been shifted away from MRF due to the significantly prolonged publication process, I have not tracked the progress of MRF literature, therefore, the claims for other parameter matching through MRF should be re-examined. Nevertheless, the overall framework can be still useful.


\begin{thebibliography}{99}

\bibitem{Ma2013}
D.~Ma, V.~Gulani, N.~Seiberlich, K.~Liu, J.~L. Sunshine, J.~L. Duerk,
and M.~A. Griswold, ``Magnetic resonance fingerprinting,''
\emph{Nature}, vol.~495, pp.~187--192, 2013.

\bibitem{Jiang2015}
Y.~Jiang, D.~Ma, N.~Seiberlich, V.~Gulani, and M.~A. Griswold,
``MR fingerprinting using fast imaging with steady state precession
(FISP) with spiral readout,'' \emph{Magn. Reson. Med.}, vol.~74,
no.~6, pp.~1621--1631, 2015.

\bibitem{Wang2018}
Z.~Wang, J.~Zhang, D.~Cui, J.~Xie, M.~Lyu, E.~S. Hui, and E.~X. Wu,
``Magnetic resonance fingerprinting using a fast dictionary searching
algorithm: MRF-ZOOM,'' \emph{IEEE Trans. Biomed. Eng.}, vol.~66, no.~6, pp.~1526-1535, 2018,
doi: 10.1109/TBME.2018.2874992.

\bibitem{Wang2020}
Z.~Wang, D.~Cui, J.~Zhang, E.~X. Wu, and E.~S. Hui, ``MRF-ZOOM for the
unbalanced steady-state free precession (ubSSFP) magnetic resonance
fingerprinting,'' \emph{Magn. Reson. Imaging}, vol.~65, pp.~146-154, 2020.

\bibitem{Bloch1946}
F.~Bloch, ``Nuclear induction,'' \emph{Physical Review}, vol.~70, no.~7-8,
pp.~460--474, 1946.

\bibitem{Weigel2015}
M.~Weigel, ``Extended phase graphs: dephasing, RF pulses, and echoes
--- pure and simple,'' \emph{J. Magn. Reson. Imaging}, vol.~41,
no.~2, pp.~266--295, 2015.

\bibitem{McGivney2014}
D.~F. McGivney, E.~Pierre, D.~Ma, Y.~Jiang, H.~Saybasili, V.~Gulani,
and M.~A. Griswold, ``SVD compression for magnetic resonance
fingerprinting in the time domain,'' \emph{IEEE Trans. Med. Imaging},
vol.~33, no.~12, pp.~2311--2322, 2014.

\bibitem{Polya1976}
G.~P\'{o}lya and G.~Szeg\H{o}, \emph{Problems and Theorems in Analysis}. Berlin: Springer, 1976.

\bibitem{Krantz2002}
S.~Krantz and H.~Parks, \emph{A Primer of Real Analytic Functions},
2nd~ed. Boston: Birkh\"{a}user, 2002.

\bibitem{Kobayashi1969}
S.~Kobayashi and K.~Nomizu, \emph{Foundations of Differential Geometry,
Vol.~II}. New York: Interscience, 1996.

\bibitem{Kay1993}
S.~M. Kay, \emph{Fundamentals of Statistical Signal Processing:
Estimation Theory}. Englewood Cliffs, NJ: Prentice Hall, 1993.

\bibitem{Zhao2018}
B.~Zhao, J.~P. Haldar, C.~Liao, D.~Ma, Y.~Jiang, M.~A. Griswold,
K.~Setsompop, and L.~L. Wald, ``Optimal experiment design for magnetic
resonance fingerprinting: Cram\'{e}r--Rao bound meets spin dynamics,''
\emph{IEEE Trans. Med. Imaging}, vol.~38, no.~3, pp.~844--861, 2018.

\bibitem{Asslander2018}
J.~Assl\"{a}nder, M.~A. Cloos, F.~Knoll, D.~K. Sodickson, J.~Hennig,
and R.~Lattanzi, ``Low rank alternating direction method of multipliers
reconstruction for MR fingerprinting,'' \emph{Magn. Reson. Med.},
vol.~79, no.~1, pp.~83--96, 2018.


\bibitem{PaperII}
Z.~Wang, ``The geometry of subspace-compressed magnetic resonance
fingerprinting: structural invariance, information loss, and the choice
of subspace dimension,'' companion manuscript.

\end{thebibliography}
\end{document}